\newtheorem{example}{Example}
\newtheorem{definition}{Definition}
\title{Combining Global and Local Merges in Logic-based Entity Resolution}
\author{%
Meghyn Bienvenu$^1$\and
Gianluca Cima$^2$\and
Víctor Gutiérrez-Basulto$^{3}$\and
Yazmín Ibáñez-García$^3$
\affiliations
$^1$Univ. Bordeaux, CNRS, Bordeaux INP, LaBRI, UMR 5800\\
$^2$Department of Computer, Control and Management Engineering, Sapienza University of Rome\\
$^3$School of Computer Science \& Informatics, Cardiff University\\
\emails
meghyn.bienvenu@labri.fr, cima@diag.uniroma1.it, \{gutierrezbasultov, ibanezgarciay\}@cardiff.ac.uk
}
\newcommand{\typef}{\mathbf{type}}
\def\dcons{\mathsf{Dom}}
\def\vars{\mathsf{vars}}
\def\qcons{\mathsf{cons}}
\def\ratom{\pi}
\newcommand{\StrPos}{\mathsf{Cells}}
\def\harrow{\Rightarrow}
\def\sarrow{\dashrightarrow}
\newcommand{\E}{\Sigma}
\newcommand{\HRO}{\Gamma_h^o}
\newcommand{\SRO}{\Gamma_s^o}
\def\hsrulesO{\Gamma_{O}}
\def\hsrulesV{\Gamma_{V}}
\newcommand{\sro}{\sigma^o}
\newcommand{\hro}{\rho^o}
\newcommand{\merO}{E}
\newcommand{\merV}{V}
\newcommand{\extdat}[3]{{#1}_{{#2},{#3}}}
\def\eqrel{\mathsf{EqRel}}
\def\existpb{\textsc{Existence}\xspace}
\def\recpb{\textsc{Rec}\xspace}
\def\maxrecpb{\textsc{MaxRec}\xspace}
\def\cmergepb{\textsc{CertMerge}\xspace}
\def\pmergepb{\textsc{PossMerge}\xspace}
\def\certanspb{\textsc{CertAns}\xspace}
\def\possanspb{\textsc{PossAns}\xspace}
\def\ex{\mathsf{ex}}
\newcommand{\Author}{\text{Author}}
\newcommand{\Paper}{\text{Paper}}
\newcommand{\Wrote}{\text{Wrote}}
\newcommand{\MaxSOL}{\mathsf{MaxSol}}
\newcommand{\SOL}{\mathsf{Sol}}
\newcommand{\SRV}{\Gamma_s^v}
\newcommand{\HRV}{\Gamma_h^v}
\newcommand{\hrv}{\rho^v}
\newcommand{\Obj}{{\bm{\mathsf{O}}}}
\newcommand{\ObjD}{\mathsf{Obj}}
\newcommand{\Val}{{\bm{\mathsf{V}}}}
\newcommand{\tidD}{{\bm{\mathsf{TID}}}}
\newcommand{\tid}{\mathsf{tid}}
\newcommand{\linkO}{\mathsf{EqO}}
\newcommand{\linkV}{\mathsf{EqV}}
\newcommand{\dom}[1]{\mathsf{dom}(#1)}
\newcommand{\PTIME}{\textsc{P}\xspace}
\newcommand{\NP}{\textrm{NP}\xspace}
\newcommand{\coNP}{\textrm{coNP}\xspace}
\newcommand{\PItwop}{{\Pi_2^p}\xspace}
\newcommand{\SItwop}{{\Sigma_2^p}\xspace}
\newcommand{\myi}{(\emph{i})\xspace}
\newcommand{\myii}{(\emph{ii})\xspace}
\newcommand{\myiii}{(\emph{iii})\xspace}
\renewcommand{\S}{\mathcal{S}}
\newcommand{\ie}{i.e.~}
\newcommand{\eg}{e.g.~}
\newcommand{\resp}{resp.~}
\newcommand{\tup}[1]{\langle #1\rangle}
\newcommand{\per}{\mbox{\bf .}}
\newcommand{\true}{\texttt{true}\xspace}
\newcommand{\false}{\texttt{false}\xspace}
\newtheorem{remark}{Remark}
\newtheorem{lemma}{Lemma}
\newcommand{\thickhline}{%
    \noalign {\ifnum 0=`}\fi \hrule height 1pt
    \futurelet \reserved@a \@xhline
}
\newcolumntype{"}{@{\hskip\tabcolsep\vrule width 1pt\hskip\tabcolsep}}
\newcommand{\thinhline}{%
    \noalign {\ifnum 0=`}\fi \hrule height 0.2pt
    \futurelet \reserved@a \@xhline
}
\newcolumntype{"}{@{\hskip\tabcolsep\vrule width 1pt\hskip\tabcolsep}}
\newcolumntype{?}{!{\vrule width 1.1pt}}
\begin{document}

\maketitle

\begin{abstract}
In the recently proposed \textsc{Lace} framework for collective entity resolution, logical 
rules and constraints are used to identify pairs of entity references (\eg author or paper ids) that denote 
the same entity. This identification is global:  all occurrences of 
those entity references (possibly across multiple database tuples) are deemed equal 
and can be merged.
By contrast, a local form of merge is often more natural when identifying pairs of data
values, \eg some occurrences of `J. Smith' may be equated with `Joe Smith', while 
others should merge with `Jane Smith'. 
This motivates us to extend \textsc{Lace} with local merges of values and explore the computational properties of the resulting formalism. 
\end{abstract}

\section{Introduction}\label{sec:Intro}

 Entity resolution (ER) is a data quality management task aiming  at identifying database different constants (of the same type) 
 that refer to the same real-world entity~\cite{SinglaICDM06}. Given the fundamental nature of this problem, several variants of ER (also known as record linkage or deduplication) have been investigated. \emph{Collective} entity resolution~\cite{BhattacharyaTKDD07,DengICDE22})
 considers the joint resolution (match, merge) of entity references or values of multiple types across multiple tables,  
 e.g.\ using the merge of two authors to infer that two paper ids have to be merged as well. 
Various approaches to collective ER, 
with different formal foundations, have been developed: probabilistic approaches, deep learning approaches, and approaches based on rules and constraints, see~\cite{ChristophidesACMSurvey21} for a survey. 

We have recently proposed  \textsc{Lace}~\cite{BienvenuPODS22}, a declarative framework for collective ER based upon
logical rules and constraints. \textsc{Lace} employs hard and soft rules to define mandatory and possible merges.
The semantics of \textsc{Lace} is dynamic: ER solutions are generated by sequences of rule applications, where rules are evaluated over the current induced database, taking into account all previously derived merges.
This makes it possible to support recursive scenarios (e.g.\ a merge of authors triggers a merge of papers which in turn enables another merge of authors), while ensuring that all merges have a (non-circular) derivation.  The semantics is also global in the sense that 
\emph{all} occurrences of the matched constants are merged, rather than only those constant occurrences used in deriving the match. Such a global semantics is well suited 
for merging constants that are entity references (e.g.\ authors or paper ids)
and has been used in other prominent logic-based approaches \cite{ArasuICDE09,BurdickTODS2016}. However, for merging attribute values (e.g.\ author names), a local semantics, which considers the context in which a value occurs, is more appropriate. Indeed, a local semantics allows some occurrences of  ‘J.\ Smith’ to be matched to ‘Joe Smith’ and others to ‘Jane Smith’, without (wrongly) equating the latter two constants. 
\emph{Matching dependencies}~\cite{BertossiTCS13,FanPODS08,FanVLDB09} are an example of a principled logical formalism for 
merging values. 

To the best of our knowledge, there is currently no 
ER framework that supports both global and local merges. 
This motivates us to introduce \textsc{Lace}$^{\tiny \text{+}}$, an extension of \textsc{Lace} with local merges of values, in which 
local merges may enable global merges, and vice versa. In particular, local merges can
resolve constraint violations which would otherwise block desirable global merges. \textsc{Lace}$^{\tiny \text{+}}$ 
extends \textsc{Lace}'s syntax by adding hard and soft rules for values,  
but it departs from \textsc{Lace} semantics by considering sets of constants, rather than single constants, as arguments in induced databases. 
Intuitively, such a set of constants 
provides alternative representations of the same information, e.g. different forms of a name. 
The semantic treatment of local merges within \textsc{Lace}$^{\tiny \text{+}}$ aligns with the design of the generic ER framework 
Swoosh~\cite{BenjellounVLDBJ09}.

 Our main contributions are the introduction of the new \textsc{Lace}$^{\tiny \text{+}}$ framework and the exploration of its computational properties. Our complexity analysis shows 
that the addition of local merges does not increase the data complexity of the considered reasoning tasks. 
We also show how an existing answer set programming (ASP) encoding of ER solutions in \textsc{Lace} can be extended to handle local merges of values.

For a discussion of related work, see \cite{BienvenuPODS22}, and for an extension of \textsc{Lace} with repairing, see~\cite{BienvenuIJCAI23}.  



\section{Preliminaries}\label{sec:Preliminaries}
\noindent \textbf{Databases} We assume that \emph{constants} are drawn from three 
infinite and pairwise disjoint
sets: a set $\Obj$ of \emph{object constants} (or \emph{objects}), serving as references to real-world entities (e.g.\ paper and author ids), 
a set $\Val$ of \emph{value constants} (or \emph{values}) from the considered datatypes (e.g.\ strings for names of authors and paper titles, dates for time of publication),
and a set $\tidD$ of  
 \emph{tuple identifiers (tids)}. 
A \emph{(database) schema} 
$\S$ consists of a finite set of \emph{relation symbols}, each having an associated
arity $k \in \mathbb{N}$ and type vector $\{\Obj,\Val\}^k$. 
We use $R/k \in \S$ to indicate that the relation symbol $R$ from $\S$ has arity $k$,
and denote by $\typef(R,i)$ the $i$th element of $R$'s type vector. If $\typef(R,i) = \Obj$ (resp.\ $\Val$),
we call $i$ an \emph{object (resp.\ value) position} of $R$. 
A \emph{($\tidD$-annotated) $\S$-database} 
is a finite set $D$ of \emph{facts} of the form $R(t,c_1, \ldots, c_k)$, where $R/k \in \S$, $t \in \tidD$, and $c_i \in \typef(R,i)$ for every $1 \leq i \leq k$.
We require that each $t \in \tidD$ occurs in at most one fact of $D$. 
We say that  $t$ (resp.\ $c_i$) occurs in position $0$ (resp.\ $i \in \{1, \ldots, k\}$) of 
$R(t,c_1, \ldots, c_k)$,
and slightly abusing notation, use $t$ and $t[j]$ respectively to refer to the unique fact having tid $t$, and to the constant in the $j$th position of that fact. 
%
The set of constants (resp.\ objects) occurring in $D$ is denoted $\dcons(D)$ (resp.\ $\ObjD(D)$),
and the 
set $\StrPos(D)$ of 
 \emph{(value) cells} of $D$  
is defined as $\{\tup{t,i} \mid R(t,c_1, \ldots, c_k) \in D, \typef(R,i) = \Val \}$.\smallskip





\noindent \textbf{Queries} In the setting of $\tidD$-annotated $\S$-databases,
a \emph{conjunctive query (CQ)}  has the form $q(\vec{x})=\exists \vec{y} \per \varphi(\vec{x},\vec{y})$, where $\vec{x}$ and $\vec{y}$ are disjoint tuples of variables, and $\varphi(\vec{x},\vec{y})$ is a conjunction of relational atoms of the form $R(u_0,u_1, \ldots, u_k)$, where $R/k \in \S$ and $u_i \in \Obj \cup \Val \cup \tidD \cup \vec{x}\cup \vec{y}$ for $0 \leq i \leq k$. 
When formulating entity resolution rules and constraints, we shall also consider extended forms of CQs that 
may contain inequality atoms or atoms built from a set of 
binary \emph{similarity predicates}. 
Note that such atoms will not contain the tid position and have a fixed meaning\footnote{The extension of similarity predicates is typically defined by applying some similarity metric, \eg edit distance, and keeping those pairs of values whose score exceeds a given threshold.}.
As usual, the \emph{arity} of $q(\vec{x})$
is the length of $\vec{x}$, and queries of arity 0 are called \emph{Boolean}. Given an $n$-ary query $q(x_1, \ldots, x_n)$ and $n$-tuple of constants $\vec{c}=(c_1, \ldots, c_n)$, we denote by $q[\vec{c}]$ the Boolean query obtained by replacing each $x_i$ by $c_i$. We use \emph{$\vars(q)$} (resp.\ $\qcons(q)$) for the set of variables (resp.\ constants) in $q$. 

\smallskip

\noindent \textbf{Constraints} Our framework will also employ denial constraints (DCs)~\cite{BertossiMC2011,FanMC2012}. 
A \emph{denial constraint} over a schema $\S$ 
takes the form 
$
    \exists \vec{y} \per \varphi(\vec{y}) \rightarrow \bot,
$
where $\varphi(\vec{y})$ is a Boolean CQ with inequalities, whose relational atoms use relation symbols from $\S$. 
We impose the standard safety condition: each variable 
occurring in an inequality atom must also occur in some relational atom. 
Denial constraints notably generalize the well-known class of \emph{functional dependencies (FDs)}. 
To simplify the presentation, we sometimes omit the initial quantifiers from DCs. 

\noindent \textbf{Equivalence Relations} We recall that an \emph{equivalence relation} on a set $S$
is a binary relation on $S$ that is reflexive, symmetric, and transitive. We use 
$\eqrel(P,S)$ for the smallest equivalence relation on $S$ that extends $P$.


\section{\textsc{Lace}$^{\tiny \text{+}}$ Framework}\label{sec:Framework}
This section presents and illustrates \textsc{Lace}$^{\tiny \text{+}}$, an extension of the \textsc{Lace} framework to handle local merges 
of values. 

\subsection{Syntax of \textsc{Lace}$^{\tiny \text{+}}$ 
Specifications}
As in \textsc{Lace}, 
we consider \emph{hard and soft rules for objects (over schema $\S$)}, 
which take respectively the forms:
$$
    q(x,y) \harrow \linkO(x,y) \quad q(x,y) \sarrow \linkO(x,y)
$$ 
where $q(x,y)$ is a CQ whose atoms may use relation symbols from $\S$ as well as similarity predicates
and whose free variables $x$ and $y$ occur only in object positions. Intuitively, the above hard (resp.\ soft) rule states that 
$(o_1,o_2)$ being an answer to $q$ provides sufficient (resp.\ reasonable) evidence for concluding that 
$o_1$ and 
$o_2$ 
refer to the same real-world entity. The special relation symbol $\linkO$ (not in $\S$) is used to store 
such merged pairs of object constants. 

%

To handle local identifications of values, we introduce 
\emph{hard and soft rules for values (over $\S$)}, which take the forms:  
\begin{align*}
q(x_t,y_t) \harrow&\, \linkV(\tup{x_t,i},\tup{y_t,j}) \\ q(x_t,y_t) \sarrow &\, \linkV(\tup{x_t,i},\tup{y_t,j})
\end{align*} 
where $q(x_t,y_t)$ is a CQ whose atoms may use relation symbols from $\S$ as well as similarity predicates, 
variables $x_t$ and $y_t$ each occur once in $q$ in position $0$ of (not necessarily distinct) relational atoms with relations $R_x \in \S$ and $R_y \in \S$, respectively, 
and $i$ and $j$ are value positions of $R_x$ and $R_y$, respectively.
Intuitively, such a hard (resp.\ soft) rule states that a pair of tids $(t_1,t_2)$ being an answer to $q$ provides sufficient (resp.\ reasonable) evidence for concluding that the
values in cells $\tup{x_t,i}$ and $\tup{y_t,j}$ 
are non-identical representations of the same information. 
The special relation symbol $\linkV$ (not in $\S$ and distinct from $\linkO$) is used to store pairs of value cells which have been merged. 



%

\begin{definition}~\label{def:ER}
    A  \textsc{Lace}$^{\tiny \text{+}}$ \emph{entity resolution (ER) specification} $\E$ for schema $\S$ takes the form $\E=\tup{\hsrulesO,\hsrulesV,\Delta}$, 
    where $\hsrulesO = \HRO \cup \SRO$ is a finite set of hard and soft rules for objects, 
    $\hsrulesV=\HRV \cup \SRV$ is a finite set of hard and soft rules for values, 
    and $\Delta$ is a finite set of denial constraints, all over $\S$.
\end{definition}

\begin{example}
The schema $\S_{\ex}$, database $D_{\ex}$, and ER specification $\E_\ex=\langle \Gamma^{O}_\ex, \Gamma^{V}_\ex, \Delta_\ex \rangle$
of our running example are given in Figure~\ref{fig:mainEx}.  Informally, the denial constraint $\delta_1$ is an FD
saying that an author id is associated with at most one 
author name, while the constraint $\delta_2$ forbids the existence of a paper written by the chair of the conference in which the paper was published. The hard rule 
$\hro_1$ states that if two author ids have the same name and the same institution, then they refer to the same author. The soft rule 
$\sro_1$ states that authors who wrote a paper in common and have similar names are likely to be the same. 
  Finally, the hard rule 
$\hrv_1$ locally merges similar names associated with the same author id.
\end{example}

\begin{figure*}[!htb]
    \begin{subtable}[t]{0.315\textwidth}
        \caption*{Author($\tid$,~aid,~name,~inst)}
        \begin{tabular}{| c ? c | c | c |}
            \thinhline
            $\tid$&aid&name&inst\\
            \thickhline
            $t_1$&$a_1$&J.~Smith&Sapienza\\
            \thinhline
            $t_2$&$a_2$&Joe Smith&Oxford\\
            \thinhline
            $t_3$&$a_3$&J.~Smith&NYU\\
            \thinhline
            $t_4$&$a_4$&Joe Smith&NYU\\
            \thinhline
            $t_5$&$a_5$&Joe Smith&Sapienza\\
            \thinhline
            $t_6$&$a_6$&Min Lee&CNRS\\
            \thinhline
            $t_7$&$a_7$&M.~Lee&UTokyo\\
            \thinhline
            $t_8$&$a_8$&Myriam Lee&Cardiff\\
            \thinhline
        \end{tabular}
    \end{subtable}
    \hfill
    \begin{subtable}[t]{0.49\textwidth}
        \centering
        \caption*{Paper($\tid$,~pid,~title,~conf,~ch)}
        \begin{tabular}{| c ? c | c | c | c |}
            \thinhline
            $\tid$&pid&title&conf&ch\\
            \thickhline
            $t_9$&$p_1$&Logical Framework for ER&PODS'21&$a_6$\\
            \thinhline
            $t_{10}$&$p_2$&Rule-based approach to ER&ICDE'19&$a_4$\\
            \thinhline
            $t_{11}$&$p_3$&Query Answering over DLs&KR'22&$a_1$\\
            \thinhline
            $t_{12}$&$p_4$&CQA over DL Ontologies&IJCAI'21&$a_1$\\
            \thinhline
            $t_{13}$&$p_5$&Semantic Data Integration&AAAI'22&$a_8$\\
            \thinhline
        \end{tabular}
        \vspace{4pt}

        The sim predicate $\approx$ is such that the names of authors $a_1$, $a_2$, $a_3$, $a_4$, and $a_5$ are pairwise similar, and both the names of authors $a_6$ and $a_8$ are similar to the name of author $a_7$
    \end{subtable}
    \hfill
    \begin{subtable}[t]{0.15\textwidth}
        \flushright
        \caption*{Wrote($\tid$,~aid,~pid)}
        \begin{tabular}{| c ? c | c |}
            \thinhline
            $\tid$&aid&pid\\
            \thickhline
            $t_{14}$&$a_1$&$p_1$\\
            \thinhline
            $t_{15}$&$a_2$&$p_1$\\
            \thinhline
            $t_{16}$&$a_3$&$p_2$\\
            \thinhline
            $t_{17}$&$a_6$&$p_3$\\
            \thinhline
            $t_{18}$&$a_7$&$p_3$\\
            \thinhline
            $t_{19}$&$a_7$&$p_4$\\
            \thinhline
            $t_{20}$&$a_8$&$p_4$\\
            \thinhline
            $t_{21}$&$a_6$&$p_5$\\
            \thinhline
        \end{tabular}
    \end{subtable}
    \hfill
    \begin{subtable}[t]{\textwidth}
        \begin{align*}
            & \delta_1=\Author(t,a,n,i) \wedge \Author(t',a,n',i') \wedge n \neq n' \rightarrow \bot; \qquad \hro_1=\Author(t,x,n,i) \wedge \Author(t',y,n,i) \harrow \linkO(x,y)\\
            & \delta_2=\Paper(t,p,\mathit{ti},c,a) \wedge \Wrote(t',a,p) \rightarrow \bot; \; \; \, \hrv_1=\Author(x,a,n,i) \wedge \Author(y,a,n',i') \wedge n\approx n' \harrow \linkV(\tup{x,2},\tup{y,2})
        \end{align*}
        \begin{center}
            $\sro_1=\Author(t,x,n,i) \wedge \Author(t',y,n',i') \wedge n \approx n' \wedge \Wrote(t'',x,p) \wedge \Wrote(t''',y,p) \sarrow \linkO(x,y)$
        \end{center}
    \end{subtable}
    \caption{Schema $\S_\ex$, $\S_\ex$-database $D_\ex$, and ER specification $\E_\ex=\langle \Gamma^{O}_\ex, \Gamma^{V}_\ex, \Delta_\ex \rangle$ with $\Gamma^O_\ex=\{\hro_1,\sro_1\}$, $\Gamma^{V}_\ex=\{\hrv_1\}$, and $\Delta_\ex=\{\delta_1,\delta_2\}$. 
    }\label{fig:mainEx}
\end{figure*}

\subsection{Semantics of \textsc{Lace}$^{\tiny \text{+}}$ Specifications}
In a nutshell, the semantics is based upon 
considering sequences of rule applications that result in a database
that satisfies the hard rule and denial constraints. 
Every such sequence 
gives rise to a solution, which takes the form of a pair of 
equivalence relations $\tup{\merO,\merV}$, specifying which objects and cells have been merged.  
Importantly, rules and constraints are evaluated w.r.t.\ the induced database, taking 
into account previously derived merges of objects and cells. 

In the original \textsc{Lace} framework, solutions consist of a single equivalence relation over objects, 
and induced databases are simply defined as the result of replacing every object with a 
representative of its equivalence class. Such an approach cannot however accommodate
local identifications of values. 
For this reason, we shall work with an extended form of database, where the arguments 
are  
\emph{sets of constants}. 

\begin{definition}
Given an $\S$-database $D$, equivalence relation $\merO$ over $\ObjD(D)$, and equivalence relation $\merV$ over $\StrPos(D)$, 
we denote by $\extdat{D}{\merO}{\merV}$ the \emph{(extended) database induced by $D$, $E$, and $V$}, which is obtained from $D$ by replacing:
\begin{itemize}
\item each tid $t$ with the singleton set $\{t\}$,
\item each occurrence of $o \in \ObjD(D)$ by $\{o' \mid (o, o') \in \merO\}$,
\item each value in a cell $\tup{t,i} \in \StrPos(D)$ with the set of values $\{t'[i'] \mid (\tup{t,i}, \tup{t',i'}) \in \merV \}$. 
\end{itemize}
\end{definition}
%

It remains to specify how queries  in rule bodies and constraints are to be 
evaluated over such induced databases. 
First, we need to say how similarity predicates 
are extended to sets of constants. 
We propose that $C_1 \approx C_2$ is satisfied whenever there are $c_1 \in C_1$ and $c_2 \in C_2$ such that $c_1 \approx c_2$,
since the elements of a set provide different possible representations of a value. 
Second, we must take care when handling join variables in value positions. Requiring all occurrences of a variable to 
map to the same set is too strong, e.g.\  it forbids us from matching \{J.\ Smith, Joe Smith\}
with \{J.\ Smith\}. We require instead that the intersection of all sets of constants assigned to a given variable
is non-empty. 

\begin{definition}\label{queryeval}
A Boolean query $q$ (possibly containing similarity and inequality atoms) is 
\emph{satisfied in} 
$\extdat{D}{\merO}{\merV}$, denoted $\extdat{D}{\merO}{\merV} \models q$, if there exists a function $h: \vars(q) \cup \qcons(q) \rightarrow 2^{\dcons(D)} \setminus \{\emptyset\}$ and 
functions $g_\ratom: \{0, \ldots, k\} \rightarrow 2^{\dcons(D)}$ for each $k$-ary relational atom $\ratom\in q$, such that: 
\begin{enumerate}
\item $h$ is determined by the $g_\pi$:  for every $a \in \qcons(q)$, $h(a)=\{a\}$, and for every $z \in \vars(q)$, $h(z)$ is the intersection of all sets $g_\ratom(i)$ such that $z$ is the $i$th argument of $\ratom$;
\item for every relational atom $\ratom = R(u_0, u_1, \ldots, u_k) \in q$, $R(g_\ratom(0), g_\ratom(1), \ldots, g_\ratom(k)) \in \extdat{D}{\merO}{\merV}$,
and for every $1 \leq i \leq k$, if $u_i \in \qcons(q)$, then $u_i \in g_\ratom(i)$;
\item for every inequality atom $z \neq z' \in q$: $h(z) \cap h(z') = \emptyset$;
\item for every similarity atom $u \approx u' \in q$: there exist $c \in h(u)$ and $c' \in h(u')$ such that $c \approx c'$.
\end{enumerate}
For non-Boolean queries, the set $q(\extdat{D}{\merO}{\merV}\!)$ of answers to $q(\vec{x})$
contains those tuples $\vec{c}$ such that $\extdat{D}{\merO}{\merV} \models q[\vec{c}]$. 
%
%
\end{definition}
Observe that the functions $g_\ratom$ make it possible to map the same variable $z$ to different sets, with Point 1 ensuring these sets have a non-empty intersection, $h(z)$. It is this intersected set, storing the common values for $z$, that is used to evaluate inequality and similarity atoms. 
Note that when constants occur in relational atoms, the sets assigned to a constant's position must contain that constant. 

The preceding definition of satisfaction of queries is straightforwardly extended to constraints and rules: 
\begin{itemize}
\item 
$\extdat{D}{\merO}{\merV} \models \exists \vec{y} \per \varphi(\vec{y}) \rightarrow \bot$ iff $\extdat{D}{\merO}{\merV} \not \models \exists \vec{y} \per \varphi(\vec{y})$
\item $\extdat{D}{\merO}{\merV} \models q(x,y) \rightarrow \linkO(x,y)$ iff $q(\extdat{D}{\merO}{\merV}) \subseteq \merO$
\item $\extdat{D}{\merO}{\merV} \models q(x_t,y_t) \rightarrow \linkV(\tup{x_t,i},\tup{y_t, j})$ iff
$(t_1,t_2) \in q(\extdat{D}{\merO}{\merV})$ implies $(\tup{t_1,i},\tup{t_2,j}) \in V$;
\end{itemize}
where symbol $\rightarrow$ can be instantiated by either $\harrow$ or $\sarrow$. We write $\extdat{D}{\merO}{\merV} \models \Lambda$ iff $\extdat{D}{\merO}{\merV} \models \lambda$ for every $\lambda \in \Lambda$.



With these notions in hand, we can formally define solutions of \textsc{Lace}$^{\tiny \text{+}}$ specifications. 

\begin{definition}\label{def:SOL}
    Given an 
    ER specification $\E=\tup{\hsrulesO,\hsrulesV,\Delta}$ over schema $\S$ and an $\S$-database $D$, we call 
    $\tup{\merO,\merV}$ a \emph{candidate solution for $(D,\E)$} if it satisfies one of the following: 
    \begin{itemize}
        \item $\merO=\eqrel(\emptyset,\ObjD(D))$ and $\merV=\eqrel(\emptyset,\StrPos(D))$; 
        \item $\merO=\eqrel(\merO' \cup \{(o,o')\},\ObjD(D))$,  where 
        $\tup{\merO',\merV}$ is a candidate solution
        for $(D,\E)$ and $(o,o')  \in q(\extdat{D}{\merO}{\merV})$ for some $q(x,y) \rightarrow \linkO(x,y) \in \hsrulesO$; 
        \item $V=\eqrel(\merV' \cup \{(\tup{t,i},\tup{t',i'})\},\StrPos(D))$, where
        $\tup{\merO,\merV'}$ is a candidate solution for $(D,\E)$ and 
        $(t,t') \in q(\extdat{D}{\merO}{\merV}\!)$ for some $q(x_t,y_t) \!\rightarrow \!\linkV(\!\tup{x_t,i},\tup{y_t,i'}\!) \in \hsrulesV$.
    \end{itemize}
If also 
$\extdat{D}{\merO}{\merV} \models \HRO \cup \HRV \cup \Delta$, 
then 
$\tup{\merO,\merV}$ 
is a \emph{solution} for $(D,\E)$. We use $\SOL(D,\E)$ for the set of solutions for $(D,\E)$.
\end{definition}

We return to our running example to illustrate solutions and the utility of local merges: 

\begin{example}\label{ex:SOL} 
Starting from database $D_{\ex}$, we can apply the soft rule 
$\sro_1$ to merge author ids $a_1$ and $a_2$ (more formally, we minimally 
extend the initial trivial equivalence relation $E$ to include $(a_1,a_2)$). 
The resulting induced instance is obtained by replacing \emph{all} occurrences of 
$a_1$ and $a_2$ by $\{a_1,a_2\}$. 
Note that the constraint $\delta_1$ is now violated, since $t_1$ and $t_2$
match on aid, but have different names. In the original \textsc{Lace} framework, 
this would prevent $(a_1,a_2)$ from belonging to any solution. 
However, thanks to the hard rule for values $\hrv_1$, we can resolve this violation. 
Indeed, $\hrv_1$  is applicable and allows us to (locally) merge 
the names in facts $t_1$ and $t_2$. 
The new induced database
contains \{J.\ Smith, Joe Smith\} in the name position of $t_1$ and $t_2$,
but the names for $t_3$, $t_4$, $t_5$ remain as before. 
Note the importance of performing a local rather than a global merge: if we had grouped
 J. Smith with Joe Smith \emph{everywhere}, this would force a merge of $a_3$ with $a_4$ due to the hard rule $\hro_1$, 
which would in turn violate $\delta_2$, again resulting in no solution containing $(a_1,a_2)$.  
Following the local merge of the names of $t_1$ and $t_2$, 
the hard rule $\hro_1$ becomes applicable and allows us (actually, forces us) 
to merge (globally) author ids $a_1$ and $a_5$. 
We let $\tup{\merO_{\ex},\merV_{\ex}}$ be the equivalence relations obtained from 
the preceding 
rule applications. As the instance induced by $\tup{\merO_{\ex},\merV_{\ex}}$
satisfies all hard rules and constraints, $\tup{\merO_{\ex},\merV_{\ex}}$ is a solution. 
Another solution is 
the pair of trivial equivalence relations,
since $D_{\ex}$ satisfies the constraints and hard rules. 
\end{example}
%

Similarly to~\cite{BienvenuPODS22}, we will compare solutions 
w.r.t.\ set inclusion, to maximize the discovered merges.  

\begin{definition}\label{def:maxsol}
    A solution $\tup{\merO,\merV}$ for $(D,\E)$ is a \emph{maximal solution} for $(D,\E)$ if there exists no solution $\tup{\merO',\merV'}$ for $(D,\E)$ such that 
    $\merO \cup \merV \subsetneq \merO' \cup \merV'$. 
%
    We denote by $\MaxSOL(D,\E)$ the set of maximal solutions for $(D,\E)$.
\end{definition}

\begin{example}
    The solution $\tup{\merO_{\ex},\merV_{\ex}}$ described in Example~\ref{ex:SOL} is not optimal 
    as the soft rule $\sro_1$ can be applied to get 
    $(a_6,a_7)$ or  $(a_7,a_8)$. 
    Notice, however, that it is not possible to include both merges, otherwise by transitivity, $a_6, a_7, a_8$ would all be replaced by $\{a_6,a_7,a_8\}$, 
    which would violate denial $\delta_1$ due to paper $p_5$. We have two maximal solutions: a first that extends $\tup{\merO_{\ex},\merV_{\ex}}$ with $(a_6,a_7)$ and the corresponding pair of names cells $(\tup{t_6,2},\tup{t_7,2})$ (due to $\hrv_1$),  
    and a second that extends $\tup{\merO_{\ex},\merV_{\ex}}$ with $(a_7,a_8)$ and the corresponding name cells $(\tup{t_6,2},\tup{t_7,2})$ (again due to $\hrv_1$). 
\end{example}

The \textsc{Lace}$^{\tiny \text{+}}$ framework properly generalizes the one in 
\cite{BienvenuPODS22}: if we take $\E=\tup{\hsrulesO,\emptyset,\Delta}$ (i.e.\ no rules for values), 
then $\merO$ is a solution for $(D,\E)$ in the original \textsc{Lace}  framework iff 
$\tup{\merO \cap (\Obj \times \Obj),\eqrel(\emptyset,\StrPos(D))} \in \SOL(D,\E)$. 


%
%

More interestingly, we show that it is in fact possible to simulate global merges using local merges. 

\begin{restatable}{theorem}{SimulationThm}\label{simulation}
For every 
ER specification $\E=\tup{\hsrulesO,\hsrulesV,\Delta}$ over $\S$, 
there exists a 
specification $\E' = \tup{\emptyset,\hsrulesV',\Delta}$ (over a modified $\S$, with all object positions changed to value positions, and all object constants treated as value constants) such that for every $\S$-database $D$: 
$\SOL(D,\E') = \{\tup{\emptyset, \merV \cup \merV_\merO}  \mid \tup{\merO,\merV} \in \SOL(D,\E)\} $,
where $V_E$ contains all pairs $(\tup{t,i},\tup{t', j})$ such that 
$(t[i],t'[j]) \in E$. 
%
%
\end{restatable}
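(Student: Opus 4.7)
The plan is to simulate each object-merge rule by a battery of value-merge rules together with a collection of hard \emph{propagation} rules that force the merging of any two cells whose induced contents overlap. Concretely, in $\E'$ I would keep $\Delta$ and $\hsrulesV$ unchanged while reclassifying every object position of $\S$ as a value position (so object constants in $D$ become value constants). For each $q(x,y) \star \linkO(x,y) \in \hsrulesO$ with $\star \in \{\harrow,\sarrow\}$, and each pair $(R_1,i_1),(R_2,i_2)$ of originally-object positions in $\S$, I introduce the rule $q(x,y) \wedge R_1(x_t,\bar{u}) \wedge R_2(y_t,\bar{v}) \star \linkV(\tup{x_t,i_1},\tup{y_t,i_2})$, where $x$ appears at position $i_1$ of the first new atom, $y$ at position $i_2$ of the second, and $\bar{u},\bar{v}$ are otherwise fresh existentials; the tid variables $x_t,y_t$ are fresh and each occurs only in position $0$ of its anchoring atom, meeting the syntactic constraint on value rules. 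For each such pair $(R_1,i_1),(R_2,i_2)$ I also add a hard propagation rule whose body is $R_1(x_t,\bar{u}) \wedge R_2(y_t,\bar{v})$ with a single shared existential $z$ at position $i_1$ in $R_1$ and position $i_2$ in $R_2$ (all other slots fresh), concluding $\linkV(\tup{x_t,i_1},\tup{y_t,i_2})$. Under the set-based semantics of Definition~\ref{queryeval}, this rule fires precisely when the two cells' induced contents share a common constant, which is what drives the propagation of object merges across cells.

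For the $\supseteq$ inclusion, given $\tup{\merO,\merV} \in \SOL(D,\E)$, I would verify that $\tup{\emptyset,\merV \cup \merV_\merO} \in \SOL(D,\E')$. The crux is that the two induced databases $\extdat{D}{\merO}{\merV}$ and $\extdat{D}{\emptyset}{\merV \cup \merV_\merO}$ agree cell by cell: at an originally-value cell both are governed by $\merV$, while at an originally-object cell $\tup{t,i}$ both yield the $\merO$-class of $t[i]$ (the $\E'$ side through $\merV_\merO$, using that every constant in $\ObjD(D)$ appears in some object cell). Consequently query evaluation agrees, and $\Delta$, $\HRV$, the transformed hard rules for objects, and the propagation rules are all satisfied. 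A valid derivation sequence can then be assembled by first exhausting the propagation rules to reach the trivial value merge $\merV_\emptyset$ (which coincides with $\merV_{\merO'}$ for $\merO'$ the identity on $\ObjD(D)$), and subsequently mirroring the given $\E$-derivation, imitating each object-rule step by the corresponding transformed rule application followed by its induced propagation closure.

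For the $\subseteq$ inclusion, given $\tup{\emptyset,V'} \in \SOL(D,\E')$, I would define $\merO$ as the equivalence closure on $\ObjD(D)$ of the relation $\{(t[i],t'[j]) \mid (\tup{t,i},\tup{t',j}) \in V' \text{ at originally-object positions}\}$, and take $\merV$ to be the restriction of $V'$ to pairs of originally-value cells. The pivotal lemma is that $V'$ restricted to object-cell pairs coincides with $\merV_\merO$: the nontrivial inclusion follows because the hard propagation rule, together with the transitivity of $V'$, connects any two object cells whose contents are $\merO$-related via a chain of overlapping-content cells. Since no rule in $\E'$ ever produces $\linkV$-facts on mixed-type cell pairs, $V'$ has no mixed pairs either, yielding $V' = \merV \cup \merV_\merO$. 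The induced databases on each side then match, so $\Delta$- and $\HRV$-satisfaction transfer immediately to $\E$, and $\HRO$-satisfaction follows from the transformed hard rules being satisfied in $\E'$ combined with the identification of $V'$'s object part with $\merV_\merO$. An $\E$-derivation for $\tup{\merO,\merV}$ is then extracted by projecting the $\E'$-derivation: transformed-rule applications become object-rule applications, $\hsrulesV$-applications are copied verbatim, and propagation steps are silently absorbed into the $\eqrel$-closure taken at each $\E$-step.

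The hardest part will be the pivotal lemma in the $\subseteq$ direction: showing that the hard propagation rules actually force $V'$ at object cells to coincide exactly with $\merV_\merO$ requires careful tracking of how induced contents grow as merges accumulate, via an induction on the $\E'$-derivation, to argue that the overlap-based hard rule suffices to ``complete'' partial object-cell merges into full $\merO$-class cliques. A secondary technical concern is meeting the ``$x_t, y_t$ occur once at position $0$'' syntactic requirement for value rules, which is handled above by always introducing fresh tid variables in fresh anchoring atoms.
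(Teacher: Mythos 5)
Your proposal is correct and follows essentially the same route as the paper: convert object positions to value positions, translate each object rule into a value rule targeting cells containing $x$ and $y$, and add hard ``overlap'' propagation rules (your propagation rules are exactly the paper's $\Gamma_{\mathsf{global}}$) so that object-cell equivalence classes coincide with object equivalence classes, with both inclusions proved by transforming derivation sequences and an induction establishing that the object-cell part of $V'$ equals $\merV_\merO$. The only (harmless) deviation is that you anchor the translated rules in fresh atoms and take a battery over all pairs of object positions, whereas the paper reuses one existing atom of $q$ containing $x$ and one containing $y$ and relies on propagation to do the rest.
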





\section{Computational Aspects}\label{sec:Computational}
We briefly explore the computational properties of \textsc{Lace}$^{\tiny \text{+}}$. 
%
As in~\cite{BienvenuPODS22}, we are interested in the \emph{data complexity} of the following decision problems: \recpb (\resp \maxrecpb) which checks if $\tup{E,V} \in \SOL(D,\E)$ (\resp $\tup{E,V} \in \MaxSOL(D,\E)$), \existpb which determines if $\SOL(D,\E) \neq \emptyset$, \cmergepb (\resp \pmergepb) which checks if
a candidate merge belongs to $E \cup V$ 
 for 
all (\resp some) $\tup{E,V} \in \MaxSOL(D,\E)$, and \certanspb (\resp \possanspb) which checks whether $\vec{c} \in q(\extdat{D}{E}{V})$ for all (\resp some) $\tup{E,V} \in \MaxSOL(D,\E)$. Interestingly, we show that incorporating 
local merges does not affect the complexity of all the above decision problems. 

\begin{restatable}{theorem}{Complexity}\label{th:Complexity}
    \recpb is $\PTIME$-complete; \maxrecpb is $\coNP$-complete; \existpb, \pmergepb, and \possanspb are $\NP$-complete; \cmergepb and \certanspb are $\PItwop$-complete. For specifications that do not use inequality atoms in denial constraints, \recpb, \maxrecpb, and \existpb are $\PTIME$-complete; \pmergepb and \possanspb are $\NP$-complete; \cmergepb and \certanspb are $\coNP$-complete.
\end{restatable}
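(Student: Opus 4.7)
The plan is to transfer both lower and upper bounds from the corresponding results for the original \textsc{Lace} framework in \cite{BienvenuPODS22}. For the lower bounds, I would exploit the conservative-extension property stated just after Example~\ref{ex:SOL}: for every \textsc{Lace} specification $\tup{\hsrulesO,\Delta}$, the \textsc{Lace}$^{\tiny \text{+}}$ specification $\tup{\hsrulesO,\emptyset,\Delta}$ has a matching set of solutions, so every hardness reduction from \cite{BienvenuPODS22} transfers verbatim to its \textsc{Lace}$^{\tiny \text{+}}$ counterpart. What remains is to establish matching upper bounds under the richer extended-database semantics.

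For the upper bounds, the key observation is that any solution $\tup{E,V}$ has size polynomial in $|D|$, since $E \subseteq \ObjD(D) \times \ObjD(D)$ and $V \subseteq \StrPos(D) \times \StrPos(D)$. Hence any minimal derivation witnessing $\tup{E,V} \in \SOL(D,\E)$ has polynomial length, since each productive step strictly enlarges one of the equivalence relations. Moreover, evaluating a fixed CQ over $\extdat{D}{E}{V}$ according to Definition~\ref{queryeval} remains polynomial in data complexity: each $g_\ratom$ ranges over the polynomially many sets appearing in the fact matched by $\ratom$, and the intersection, similarity, and inequality conditions are directly checkable. This gives \recpb in \PTIME. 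From there, \existpb, \pmergepb, and \possanspb are in \NP by guessing a suitable (maximal) solution together with a witnessing derivation; \maxrecpb is in \coNP by guessing a strictly extending pair; and \cmergepb, \certanspb are in \PItwop by universally quantifying over candidate maximal solutions with a \coNP oracle call for the maximality check.

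For specifications without inequality atoms in DCs, the complexity drops because DC satisfaction becomes anti-monotone under merges: if $\extdat{D}{E}{V}$ violates some DC, which is a positive Boolean CQ mapped to $\bot$, then any extension of $\tup{E,V}$ continues to violate it, since merging only adds matches. A chase-style procedure then suffices: exhaustively apply the hard rules, then greedily apply the soft rules while DC satisfaction is preserved, halting when no further extension is possible. This runs in polynomial time and, when a solution exists, produces a maximal one, yielding \PTIME for \existpb and \maxrecpb. For \cmergepb and \certanspb, the complement can guess a single maximal solution (verifiable in \PTIME) and check absence of the candidate, dropping these problems to \coNP. The main obstacle I anticipate is proving correctness of the chase when local value merges are present: object and value merges interact via joins (as seen in Example~\ref{ex:SOL}, where a local name merge unlocks a global object merge), so I would need a confluence-style argument that the outcome of hard-rule application is independent of order, together with a scheduling lemma ensuring that every maximal solution can be reached by some greedy run of the soft rules.
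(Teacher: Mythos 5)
Your overall strategy matches the paper's: lower bounds transfer from \textsc{Lace} (and, via Theorem~\ref{simulation}, hold even without object rules), and the upper bounds rest on exactly the ingredients the paper isolates as lemmas --- polynomial-time evaluation of a fixed query over $\extdat{D}{E}{V}$ by restricting each $g_\ratom$ to the polynomially many sets actually occurring in the induced database, polynomial-size solutions with a saturation-based check for \recpb, and anti-monotonicity of denial-constraint violation in the inequality-free case. One step needs more care as written: for \pmergepb and \possanspb you propose that the \NP machine guesses ``a suitable (maximal) solution,'' but verifying maximality is itself a \coNP check in the general case, so a literal reading lands you in $\Sigma_2^p$ rather than \NP. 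The paper avoids this with a preservation lemma: a candidate merge (or CQ answer) holds in \emph{some maximal} solution iff it holds in \emph{some} solution, because every solution extends to a maximal one and, by monotonicity of CQ answers under enlarging $E \cup V$, merges and inequality-free query answers survive the extension. With that lemma the verifier only needs the \PTIME \recpb check and your argument goes through; everything else, including the single-step soft-rule extension test for \maxrecpb when denial constraints have no inequalities, is the paper's proof in all essentials.
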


Due to Theorem \ref{simulation}, all lower bounds hold even for specifications that do not contain any rules for objects.

\newcommand{\PSol}{\ensuremath{\Pi_{\textit{Sol}}}\xspace}
\newcommand{\match}{\textit{match}\xspace}
\newcommand{\Sim}{\textit{Sim}\xspace}
\newcommand{\val}{\textit{Val}\xspace}
\newcommand{\cell}{\textsf{c}\xspace}
\newcommand{\EqV}{\textit{EqV}\xspace}
\newcommand{\EqO}{\textit{EqO}\xspace}
\newcommand{\occ}{\textsf{vpos}}
\newcommand{\Oocc}{\textsf{opos}}
\newcommand{\proj}{\textit{Proj}}
\newcommand{\obj}{\textit{Obj}}
An ASP encoding of the original \textsc{Lace}  framework was proposed in \cite{BienvenuPODS22}. We can extend that 
encoding to obtain a normal logic program \PSol whose stable models capture  \textsc{Lace}$^{\tiny \text{+}}$  solutions: 
\begin{restatable}{theorem}{thmASPSols}\label{thm:asp-sols}
For every database $D$ and 
specification $\E=\tup{\hsrulesO,\hsrulesV,\Delta}$: 
$\tup{\merO,\merV}\in \SOL(D,\E)$ iff $\merO = \{(a,b) \mid \EqO(a,b) \in M\}$ and $\merV = \{(\tup{t,i},\tup{t',i'}) \mid \EqV(t,i,t',i') \in M\}$
for a stable model $M$ of $(\Pi_{Sol}, D)$. 
\end{restatable}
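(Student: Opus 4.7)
The plan is to construct $\PSol$ by extending the ASP encoding of the original \textsc{Lace} framework from \cite{BienvenuPODS22} with predicates and rules handling value cells and their equivalences. On top of the predicates encoding $D$, derivations, and $\EqO$, we add a predicate $\EqV(t,i,t',i')$ capturing the merging of cells $\tup{t,i}$ and $\tup{t',i'}$, together with rules imposing that $\EqV$ is reflexive, symmetric and transitive on $\StrPos(D)$. For each hard/soft rule in $\hsrulesO \cup \hsrulesV$, we include rules encoding the evaluation of the rule body in the induced extended database $\extdat{D}{\merO}{\merV}$; soft rules are represented via choice constructs that guess whether the merge is applied, hard rules force the corresponding $\EqO$ or $\EqV$ atom once their body is matched, and each denial constraint $\delta \in \Delta$ becomes an ASP integrity constraint. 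The recursive derivation structure of Definition \ref{def:SOL} is mirrored by auxiliary ``derived-at-step'' annotations ensuring that each rule application uses a body matched in the instance induced by the previously derived merges, so that the inductive candidate-solution clauses correspond to supported ASP derivations.

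The key encoding subtlety is faithfully capturing query satisfaction on extended databases in the sense of Definition \ref{queryeval}, where each occurrence of a variable in a relational atom may be associated with a distinct set $g_\ratom(i)$, subject only to the non-empty intersection condition for all occurrences of a given variable $z$. The plan is to reify witnesses: for each relational atom $\ratom = R(u_0,\ldots,u_k)$ in $q$ and each position $i$, the encoding picks a constant representing $g_\ratom(i)$, required to lie in the $\EqO$/$\EqV$-class of the actual constant stored in the corresponding position of a matching fact of $D$; and for each variable $z$ we pick a single global witness required to be $\EqO$/$\EqV$-related to all per-atom witnesses for positions where $z$ occurs. Inequality atoms $z \neq z'$ become the requirement that the global witnesses of $z$ and $z'$ are not related, and similarity atoms $u \approx u'$ are handled by checking $\approx$ on the chosen witnesses, which by the set-based semantics reduces to the existence clause in Definition \ref{queryeval}(4).

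With the encoding in place, soundness is established by induction on the support level of atoms in a stable model $M$: the $\EqO$ and $\EqV$ atoms projected from $M$ form equivalence relations, and each merge atom is supported by a rule application whose body is matched in the corresponding induced database (by the witness encoding), witnessing exactly the recursive clause defining candidate solutions in Definition \ref{def:SOL}; the integrity constraints and the forced application of hard rules then guarantee $\extdat{D}{\merO}{\merV} \models \HRO \cup \HRV \cup \Delta$. For completeness, given a solution $\tup{\merO,\merV} \in \SOL(D,\E)$ we fix a derivation according to Definition \ref{def:SOL}, read off the choice selections it prescribes together with witness assignments realising the existential conditions in Definition \ref{queryeval}, and verify that the resulting interpretation is a minimal model of the Gelfond--Lifschitz reduct of $(\PSol,D)$.

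The main obstacle will be the faithful ASP encoding of query satisfaction on extended databases, in particular showing that the witness-based encoding soundly and completely captures Definition \ref{queryeval}, including the treatment of join variables appearing in different atoms with distinct $g_\ratom$ sets and the interaction between local value merges and similarity predicates. Once this correspondence is established, the remaining bijection between stable models and solutions follows from the well-understood alignment between ASP foundedness and the inductive candidate-solution clauses of Definition \ref{def:SOL}, closely mirroring the argument for the original \textsc{Lace} encoding.
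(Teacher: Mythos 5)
Your proposal matches the paper's proof in all essentials: the same witness-reification encoding of Definition~\ref{queryeval} (per-atom-occurrence fresh variables linked to a shared per-variable witness via $\EqV$/$\EqO$ atoms, with similarity checked on witnesses drawn from the intersection), the same choice-rule treatment of soft rules, integrity constraints for denials, explicit closure rules making $\EqO$ and $\EqV$ equivalence relations, and the same two-directional correspondence between candidate-solution derivation sequences and minimal models of the reduct. The only deviation is your auxiliary ``derived-at-step'' annotations, which the paper omits because stable-model foundedness already enforces the non-circular derivation order; they are harmless but unnecessary.
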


We refer to the appendix for the details and sketch here how rules for values are handled. 
Basically, every hard rule 
$q(x_t,y_t) \harrow  \linkV(\tup{x_t,i},\tup{y_t,j}) $ 
is translated into the ASP rule $\EqV(x_t,i,y_t,j) \leftarrow \hat{q}(x_t,y_t)$.
To define $\hat{q}$, we use $\occ(v)$ (resp.\ \Oocc) for 
the set of pairs $(u_t,i)$ 
such that $v$ occurs in a value (resp.\ object) position $i$ in atom $R(u_t,v_1, \dots, v_k) \in q$.
The query $\hat{q}$ is obtained from $q$ by replacing each occurrence $(u_t,i)$ of a 
non-distinguished variable $v$ in $q$ 
with a fresh variable $v_{(u_t,i)}$, and then: 

\begin{itemize}
\item  for every join variable $v$ in $q$, take fresh variables $u'_t,k,v'$ and add to
$\hat{q}$ the set of atoms $\{EqV(u_t, i, u'_t, k) \mid (u_t,i) \in \occ(v) \} \cup \{\EqO(v_{(u_t,i)},v') \mid (u_t,i) \in \Oocc(v) \}$; 
 \item for each 
atom 
$\alpha= v \approx w$, take fresh variables $v',w'$ and replace $\alpha$  by the set of atoms
 $\{\val(u_t, i, v' )| (u_t,i) \in \occ(v) \} \cup \{\val(u'_t, j, w' )| (u'_t,j) \in \occ(w) \} \cup \{v' \approx w'\}$, where $\val$
 is a predicate defined by the rule:
 $$\val(u_t,i,v) \leftarrow \EqV(u_t,i,u'_t,j), \proj(u'_t,j,v), \quad \text{and}$$
 ground atoms $\proj(t,i,c)$ of $\proj/3$ encode $t[i]= c$.
%
\end{itemize}
Soft rules for values are handled similarly: we use the same modified body $\hat{q}$,
but then enable a choice between producing $\EqV(x_t,i,y_t,j)$ or not applying the rule (adding a blocking fact $\textit{NEqV}(x_t,i,y_t,j)$).
Additionally, \PSol will contain rules that encode object rules (producing $\EqO$ facts), rules that ensure $\EqV$ and 
$\EqO$ are equivalence relations, 
and rules that enforce the satisfaction of the denial constraints.

\section*{Acknowledgements}
This work has been supported by the ANR AI Chair INTENDED (ANR-19-CHIA-0014), by MUR under the PNRR project FAIR (PE0000013), and by the Royal Society (IES\textbackslash R3\textbackslash 193236).

\bibliographystyle{kr}
\bibliography{refs}

\appendix
\onecolumn
\def\dval{D^{\Val}}

Before giving the proof of Theorem 1, we start by being more precise what we mean by applicability and applications of rules. 
We say that a rule $\rho=q(x,y) \rightarrow \linkO(x,y) \in \hsrulesO$
is applicable in $\extdat{D}{\merO}{\merV}$ if there exists $(o,o')  \in q(\extdat{D}{\merO}{\merV})$ 
such that $(o,o') \not \in \merO$. Applying this rule will means adding $(o,o')$ to $\merO$ (then closing under symmetry and transitivity to get back to an equivalence relation, as implemented by $\eqrel$). 
Similarly, we say that a rule 
$q(x_t,y_t) \rightarrow \linkV(\tup{x_t,i},\tup{y_t,i'}) \in \hsrulesV$ is applicable in 
$\extdat{D}{\merO}{\merV}$ if there exists  $(t,t') \in q(\extdat{D}{\merO}{\merV})$ 
such that $(\tup{t,i},\tup{t',i'}) \not \in \merV$. In this case, applying the rule means adding 
$(\tup{x_t,i},\tup{y_t,i'})$ to $\merV$ (and then closing the set under symmetry and transitivity). 

\SimulationThm*

\begin{proof}
Take a \textsc{Lace}$^{\tiny \text{+}}$ ER specification $\E=\tup{\hsrulesO,\hsrulesV,\Delta}$ over $\S$. 
We define the new set $\hsrulesV'$ 
as $\hsrulesV \cup \Gamma_{\Obj \leadsto\Val} \cup \Gamma_{\mathsf{global}}$, where:
\begin{itemize}
\item $\Gamma_{\Obj \leadsto\Val}$ contains, for every rule for objects $\rho=q(x,y) \rightarrow \linkO(x,y) \in \hsrulesO$, the following rule for values
$$\rho^v= q(x_t,y_t) \rightarrow \linkV(\tup{x_t,i},\tup{y_t, j})$$
where $q(x_t,y_t)$ has the same atoms as $q(x,y)$ (simply $x_t$ and $y_t$ appear in the rule head, rather than $x,y$), and $\tup{x_t,i}$ and $\tup{y_t, j}$ are (any arbitrarily selected) positions in $q$ that contain $x$ and $y$, respectively. 
\item $\Gamma_{\mathsf{global}}$ contains all hard rules for values of the form 
$$P(x_t,u_1, \ldots, u_{i-1}, z, u_{i+1}, \ldots, u_k) \wedge P'(y_t,v_1, \ldots, v_{j-1}, z, v_{j+1}, \ldots, v_\ell) \harrow \linkV(\tup{x_t,i},\tup{y_t, j})$$
where $P/k,P'/\ell \in \S$ and $\typef(P,i) = \typef(P',j)= \Obj$ (w.r.t.\ the original schema $\S$). 
\end{itemize}
Intuitively, every rule for objects $\rho$ is replaced by a rule $\rho^v$ for values that will merge a single pair of cells containing the variables $x,y$. 
The rules in $\Gamma_{\mathsf{global}}$ ensure that all cells that contain the same object constant (w.r.t.\ the original schema) will be in the same equivalence class, hence merging a single pair of cells with object constants ensures that all other cells containing occurrences of these objects (now treated as values) will belong to the same equivalence class. 

Note that when computing $\SOL(D,\E')$, all objects in $D$ will be treated as elements of $\Val$. To keep track of the two different settings in the proof, we shall use $\S_{\Val}$ for the schema obtained from $\S$ by changing all object positions to value positions, and we shall use $D^{\Val}$ to indicate that we are considering the database $D$ w.r.t.\ this modified schema, i.e., where all constants $o \in \ObjD(D)$ are now assumed to belong instead to $\Val$. Observe that, $\StrPos(D^{\Val})$ now additionally contains all pairs $\tup{t,i}$ such that $t[i]\in \ObjD(D)$. We use the notation $\StrPos_\Obj(D^{\Val})$ 
for the subset of 
$\StrPos(D^{\Val})$ containing pairs $\tup{t,i}$ such that $t[i] \in \Obj$ (w.r.t.\ the original schema), 
and will call $\tup{t,i}$ an \emph{object cell}.
Likewise, we use $\StrPos_\Val(D^{\Val})$
for the subset of 
$\StrPos(D^{\Val})$ containing pairs $\tup{t,i}$ such that $t[i] \in \Val$ (w.r.t.\ the original schema).

To complete the proof, we must show that for every $\S$-database $D$: $$\SOL(D^{\Val},\E') = \{\tup{\emptyset, \merV \cup \merV_\merO}  \mid \tup{\merO,\merV} \in \SOL(D,\E)\}, $$
where $V_E$ contains all pairs $(\tup{t,i},\tup{t', j})$ such that $(t[i],t'[j]) \in E$. Note that we use $D^{\Val}$ in $\SOL(D^{\Val},\E')$ to emphasize that we are using the version of $D$ with objects treated as values. \medskip

\noindent ($\Leftarrow$).  
First we suppose that $\tup{\merO,\merV} \in \SOL(D,\E)$ and show that $\tup{\emptyset, \merV \cup \merV_\merO}$ belongs to 
$\SOL(D^{\Val},\E')$.  
By Definition~\ref{def:SOL}, there exists a sequence $\tup{\merO_0,\merV_{0}}, \tup{\merO_1,\merV_{1}},  \ldots, \tup{\merO_n,\merV_n}$
such that $\merO_0=\eqrel(\emptyset,\ObjD(D))$, $\merV_0=\eqrel(\emptyset,\StrPos(D))$, 
$\tup{\merO,\merV}= \tup{\merO_n,\merV_n}$, 
 and for every $0 \leq j < n$, one of the following holds:
 \begin{itemize}
     \item $\tup{\merO_{j+1},\merV_{j+1}}= \tup{\eqrel(\merO_j \cup \{(o_j,o_j')\},\ObjD(D)),\merV_{j}}$ where 
        $(o_j,o_j')  \in q(\extdat{D}{\merO_j}{\merV_j})$ for $\rho_j=q(x,y) \rightarrow \linkO(x,y) \in \hsrulesO$, 
        \item $\tup{\merO_{j+1},\merV_{j+1}} = \tup{\merO_{j}, \eqrel(\merV_j \cup \{(\tup{t_j,i_j},\tup{t_j',i_j'})\},\StrPos(D)}$
        where 
        $(t_j,t_j') \in q(\extdat{D}{\merO_j}{\merV_j})$ for some $\rho_j = q(x_t,y_t) \rightarrow \linkV(\tup{x_t,i_j},\tup{y_t,i'_j}) \in \hsrulesV$.
    \end{itemize}
Moreover, we know that $\extdat{D}{\merO}{\merV} \models \HRO \cup \HRV \cup \Delta$. 

We can now build a corresponding sequence of candidate solutions for $(D^{\Val},\E')$
that witnesses that $\tup{\emptyset, \merV \cup \merV_\merO} \in \SOL(D^{\Val},\E')$. 
The sequence will have the form 
$$\tup{\emptyset,\merV^{\mathsf{init}}_{0}}, \ldots, \tup{\emptyset,\merV^{\mathsf{init}}_{w}},  \tup{\emptyset,\merV^*_{0}}, \tup{\emptyset,\merV^*_{1}},  \ldots, \tup{\merO_n,\merV^*_n}$$
where we first exhaustively apply the rules in $\Gamma_{\mathsf{global}}$ to $\dval$, giving us the initial sequence 
$\tup{\emptyset,\merV^{\mathsf{init}}_{0}}, \ldots, \tup{\emptyset,\merV^{\mathsf{init}}_{w}},  \tup{\emptyset,\merV^*_{0}}$. After  having completed all possible rule applications of $\Gamma_{\mathsf{global}}$, we will have the equivalence relation $\merV^*_{0}$ on $\StrPos(D^{\Val})$, whose 
equivalence classes will be of two types: (i) singleton sets $\{\tup{t,i}\}$ containing a value cell,
or (ii) sets containing all object cells that contain a given constant $c \in \ObjD(D)$. At this point, we can essentially reproduce the original sequence of rule applications, using the rules in $\hsrulesV \cup \Gamma_{\Obj \leadsto\Val}$, to produce the remainder of the sequence $\tup{\emptyset,\merV^*_{0}}, \tup{\emptyset,\merV^*_{1}},  \ldots, \tup{\merO_n,\merV^*_n}$. When $\rho_j \in \hsrulesV$, we can keep the same rule application which adds the pair of value cells $(\tup{t_j,i_j},\tup{t_j',i_j'})$.
However, when $\rho_j \in \hsrulesO$, we must use the corresponding rule for values $\rho^v$. Instead of adding $(o_j,o_j')$ to $\merO_j$, the application of $\rho^v$ will add to $\merV^*_{j}$ a single pair of object cells $(\tup{t_j,i_j},\tup{t_j',i_j'})$  such that 
$t_j[i_j]=o_j$ and $t_j'[i_j']=o_j'$. It can be shown by induction that by proceeding in this manner, we have the following properties, for all $0 \leq j \leq h$:
 \begin{itemize}
 \item $\merV^*_{j} = \merV_{j} \cup \{(\tup{\hat{t},\ell},\tup{\hat{t}', \ell'}) \mid (\hat{t}[\ell], \hat{t}'[\ell']) \in E_j \}$
 \item $D_{E_j,V_j} = \dval_{\emptyset,V^*_j}$ (modulo the fact that object constants in $D_{E_j,V_j}$ are considered as value constants in $\dval_{\emptyset,V^*_j}$)
 \end{itemize}
In particular, this means that we have $\tup{\emptyset,\merV^*_n}= \tup{\emptyset, \merV \cup \merV_\merO}$.
Moreover, since $\extdat{D}{\merO}{\merV} \models \HRO \cup \HRV \cup \Delta$, we immediately get $\dval_{\emptyset,V^*_n} \models \HRV \cup \Delta$, and we can use  $\extdat{D}{\merO}{\merV} \models \HRO$, and the fact that $\rho$ and $\rho^v$
share the same rule body, to show that we also have $\dval_{\emptyset,V^*_n} \models \rho^v$ for all hard rules in $\Gamma_{\Obj \leadsto\Val}$. By construction, we also have $\dval_{\emptyset,V^*_n} \models \Gamma_{\mathsf{global}}$. 
It follows that $\tup{\emptyset, \merV \cup \merV_\merO} \in \SOL(D^{\Val},\E')$. \medskip

\noindent ($\Rightarrow$). For the other direction, take some $(\merO^*,\merV^*) \in \SOL(D^{\Val},\E')$. Then $\ObjD(D^{\Val})=\emptyset$, so we must have $\merO^* = \emptyset$.
Moreover, due to the structure of the new rules, for every pair $(\tup{t,i}, \tup{t',i'}) \in \merV^* \setminus (\StrPos_\Val(D^{\Val}) \times \StrPos_\Val(D^{\Val}))$, it must be the case that $t[i], t'[i'] \in \Obj$ (w.r.t.\ the original schema and definition of $\Obj$ and $\Val$), hence, $(\tup{t,i}, \tup{t',i'}) \in \merV^* \cap (\StrPos_\Obj(\dval) \times \StrPos_\Obj(\dval))$. We can thus partition $V^*$ into $V^*_\Val = V^* \cap (\StrPos_\Val(D^{\Val}) \times \StrPos_\Val(D^{\Val}))$ (containing pairs of value cells of $D$)
and $V^*_\Obj = V^* \cap (\StrPos_\Obj(D^{\Val}) \times \StrPos_\Obj(D^{\Val}))$ (containing pairs of object cells). 

By Definition \ref{def:SOL}, $\extdat{\dval}{\emptyset}{\merV^*} \models\Delta$ and $\extdat{\dval}{\emptyset}{\merV^*} \models\rho$ for every hard rule $\rho \in \hsrulesV'$. 
Furthermore, there exists a sequence $(\emptyset,\merV^*_0), (\emptyset,\merV^*_1),  \ldots, (\emptyset,\merV^*_m)$
of candidate solutions of $\SOL(D^{\Val},\E')$
such that:  
\begin{itemize}
\item $\merV^*_0=\eqrel(\emptyset,\StrPos(D^{\Val}))$, 
\item $\merV^*_m=\merV^*$, and
\item  for every $0 \leq k \leq m-1$, $\merV^*_{k+1}=\eqrel(\merV^*_k \cup \{(\tup{t_k,i_k},\tup{t'_k,i'_k})\},\StrPos(\dval))$, where
        $(t_k,t_k') \in q(\extdat{\dval}{\emptyset}{\merV^*_k}\!)$ \newline for some rule $\rho_k=q(x_t,y_t) \rightarrow \linkV(\tup{x_t,i_k},\tup{y_t,i'_k}) \in \hsrulesV'$.
\end{itemize}
We define $V^*_{\Obj,k} = V^*_k \cap V^*_\Obj $ and $V^*_{\Val,k} = V^*_k \cap V^*_\Val$.

We may \emph{assume w.l.o.g.\ that the hard rules in $\Gamma_{\mathsf{global}}$ are applied as soon as they are applicable}, since such rule applications must occur and moving them earlier cannot prevent the other rule applications from being performed later (this is due to the monotonicity of query answers w.r.t.\ extending the equivalence relations, cf.\ Lemma \ref{lem:monotone}). 

Let $p_0, p_1, \ldots, p_h$ be all of the indices $k$ (in increasing order) such that $\rho_k \in \hsrulesV \cup \Gamma_{\Obj \leadsto\Val}$.
Then due to our assumption on the order of rule applications, we can show by induction that we have the following property, for all for every $0 \leq j \leq h$:
\begin{description}
\item[($\star$)] $V^*_{\Obj,p_j}$ is an equivalence relation on $\StrPos_\Obj(D^{\Val})$ such that for every equivalence class $S_\ell$, 
there is a subset $O_\ell \subseteq \ObjD(D)$ such that $\tup{t,i} \in S_\ell$ iff $t[i] \in O_\ell$
\end{description}
and furthermore, that for all $0 < j \leq h$, we have $(\dagger)$ $p_{j+1}=p_j+1$. The latter property means that 
the rules in $\Gamma_{\mathsf{global}}$ are applied exhaustively at the beginning, and then never again. 

It follows from ($\star$) that with every $V^*_{\Obj,p_j}$, with $0 \leq j \leq h$, we have a corresponding equivalence relation $\merO_j$ on $\ObjD(D)$
such that $(\tup{t,i},\tup{t', i'}) \in V^*_{\Obj,p_j}$ iff $(t[i],t'[j]) \in E_j$. 
For every $0 \leq j \leq h$, we let $\merV_{j} = V^*_{\Val,p_j}$. We then build the sequence
$$(\merO_0,\merV_{0}), (\merO_1,\merV_{1}),  \ldots, (\merO_h,\merV_h)$$
Note that by construction, we have $\merO_0=\eqrel(\emptyset,\ObjD(D))$ and $\merV_0=\eqrel(\emptyset,\StrPos(D))$. 
It can be proven by induction that, for every $0 \leq j \leq h$, $D_{E_j, V_j} = \dval_{\emptyset,\merV^*_{p_j}}$ and 
one of the following holds:
\begin{itemize}
\item $\rho_{p_j} = q(x_t,y_t) \rightarrow \linkV(\tup{x_t,i_{p_j}},\tup{y_t,i_{p_j}'})  \in \hsrulesV$, $\merO_{j+1} = \merO_j$, and $\merV_{j+1} = \eqrel(\merV_{j} \cup \{(\tup{t_{p_j},i_{p_j}},\tup{t'_{p_j},i'_{p_j}})\},\StrPos(D))$, where  $(t_{p_j},t_{p_j}') \in q(D_{E_j, V_j})$;
\item $\rho_{p_j} \in \Gamma_{\Obj \leadsto\Val}$, $\merV_{j+1} = \merV_j$,  and $\merO_{j+1} = \eqrel(\merO_j \cup \{(o_j,o_j')\}, \ObjD(D))$, where 
$\rho_{p_j}= \tau^v =  q(x_t,y_t) \rightarrow \linkV(\tup{x_t,i},\tup{y_t, i'})$, $\tau=q(x,y) \rightarrow \linkO(x,y) \in \hsrulesO$, and 
$(o_j,o_j') \in q(D_{E_j, V_j})$
\end{itemize}
Moreover, since $D_{E_h, V_h} = \dval_{\emptyset,\merV^*_{p_h}} = \dval_{\emptyset,\merV^*}$ and $\extdat{\dval}{\emptyset}{\merV^*}$ satisfies $\Delta$ and all hard rules in $\hsrulesV'$, we immediately get $D_{E_h, V_h} \models \HRV \cup \Delta$. We can further use the fact that $\extdat{\dval}{\emptyset}{\merV^*}$ satisfies every hard rule $\rho^v \in \Gamma_{\Obj \leadsto\Val}$ to show that we also have $D_{E_h, V_h} \models \HRO$ (recall that $\rho^v \in \Gamma_{\Obj \leadsto\Val}$  has the same body as $\rho \in \hsrulesO$). It follows that $(E_h, V_h)$ is a solution to $(D, \E)$. 
To complete the proof, we observe that $(\merO^*,\merV^*)= (\emptyset,\merV^*)$ is such that 
$\merV^* = \merV_h \cup V_{E_h}$. 
\end{proof}

\noindent Before providing the proof of Theorem~\ref{th:Complexity}, we now introduce some intermediate results which are crucial to establish the upper bounds claimed in the theorem.

\begin{lemma}\label{lem:Eval}
    Let $q$ be a Boolean query (possibly involving similarity and inequality atoms) over a schema $\S$, $D$ be an $\S$-database, $E$ be an equivalence relation on $\ObjD(D)$, and $V$ be an equivalence relation on $\StrPos(D)$. Checking whether $\extdat{D}{E}{V} \models q$ can be done in polynomial time in the size of $D$, $E$, and $V$.
\end{lemma}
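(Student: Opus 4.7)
The plan is to reduce the problem to guessing, for each relational atom of $q$, a fact of $D$ that witnesses it, and then verifying the four conditions of Definition~\ref{queryeval} on the induced assignment. The key structural observation is that, by construction of $\extdat{D}{E}{V}$, every fact it contains has the form $R(\{t\}, S_1, \ldots, S_k)$ where $R(t, c_1, \ldots, c_k) \in D$, and each $S_i$ is uniquely determined from $t$, $c_i$, $E$, and $V$: $S_i$ is the $E$-class of $c_i$ when $i$ is an object position, and $S_i = \{t'[i'] \mid (\langle t, i\rangle, \langle t', i'\rangle) \in V\}$ when $i$ is a value position. Consequently, for any witness $g_\pi$ of an atom $\pi = R(u_0, u_1, \ldots, u_k)$, the entire tuple $(g_\pi(0), \ldots, g_\pi(k))$ is fully determined by a single fact $\phi_\pi \in D$ with relation symbol $R$, namely the one whose tid satisfies $g_\pi(0) = \{t\}$.

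Based on this, I would proceed as follows. First, precompute in polynomial time (via union--find) the $E$-class of every object in $\ObjD(D)$ and, for every cell in $\StrPos(D)$, the set of values lying in its $V$-class. Then enumerate all assignments $\pi \mapsto \phi_\pi \in D$ from the relational atoms of $q$ to facts of $D$; there are at most $|D|^{|q|}$ such assignments. For each one, compute the induced $g_\pi$, then verify: (i) for every relational atom $\pi$ and position $i$ where $u_i$ is a constant, $u_i \in g_\pi(i)$ (Condition~2); (ii) for every variable $z$, the intersection $h(z) := \bigcap g_\pi(i)$, taken over all pairs $(\pi, i)$ in which $z$ occurs at position $i$ of $\pi$, is nonempty (this realises Condition~1 together with the requirement that $h$ takes values in $2^{\dcons(D)} \setminus \{\emptyset\}$); (iii) for every inequality atom $z \neq z'$, $h(z) \cap h(z') = \emptyset$ (Condition~3); and (iv) for every similarity atom $u \approx u'$, there exist $c \in h(u)$ and $c' \in h(u')$ with $c \approx c'$ (Condition~4), using $h(a) = \{a\}$ for constants. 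Accept iff some assignment passes all checks.

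Each of these checks runs in polynomial time in $|D|$, $|E|$, and $|V|$: the sets $g_\pi(i)$ and $h(z)$ are subsets of $\dcons(D)$, their intersections can be computed in $O(|\dcons(D)|)$, and the similarity test is at worst quadratic in these set sizes (assuming similarity predicates are evaluable in polynomial time per pair, as is standard in this setting). Since the number of candidate assignments and the work per assignment are both polynomial, the overall runtime is polynomial. The main technical point to discharge carefully is the completeness of the reduction, namely that any genuine witness $(h, (g_\pi)_\pi)$ satisfying Definition~\ref{queryeval} arises from some assignment of the above form; this follows from the structural observation in the first paragraph, since the tid component $g_\pi(0) = \{t\}$ uniquely identifies the underlying fact $\phi_\pi \in D$ and thereby pins down all other $g_\pi(i)$.
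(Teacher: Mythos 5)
Your proposal is correct and follows essentially the same route as the paper's proof: both arguments observe that any witnessing family $(g_\pi)_\pi$ for Definition~\ref{queryeval} is confined to the polynomially many sets actually occurring in $\extdat{D}{E}{V}$, enumerate the polynomially many candidate families (the paper over functions into $\dom{\extdat{D}{E}{V}}$, you more tightly over facts of $D$, exploiting that the tid in position $0$ pins down the whole tuple), and then verify conditions 1--4 in polynomial time. The only difference is this minor sharpening of the enumeration, which does not change the argument in substance.
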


\begin{proof}
    First, it is immediate to verify that computing $\extdat{D}{E}{V}$, \ie the extended database induced by $D$, $E$, and $V$, can be done in polynomial time in the size of $D$, $E$, and $V$. With a slight abuse of notation, let $\dom{\extdat{D}{E}{V}}$ be the set of \emph{sets of constants} occurring in the extended database $\extdat{D}{E}{V}$. Assuming that $q$ is fixed, we now show that checking whether $\extdat{D}{E}{V} \models q$ can be done in polynomial time.
    
    Observe that, due to requirement 2.~of Definition~\ref{queryeval}, if there exists functions $h$ and $g_{\ratom}$ for each relational atom $\ratom \in q$ witnessing that $\extdat{D}{E}{V} \models q$, \ie satisfying requirements 1., 2., 3., and 4., then the image of each $g_{\ratom}$ must necessarily be a subset of $\dom{\extdat{D}{E}{V}}$. It follows that, when seeking for functions $h$ and $g_{\ratom}$ for each relational atom $\ratom \in q$ witnessing that $\extdat{D}{E}{V} \models q$, we can actually restrict the codomain of each $g_{\ratom}$ to $\dom{\extdat{D}{E}{V}}$ rather than considering $2^{\dcons(D)}$.

    Now, for each $k$-ary relational atom $\ratom \in q$, let $G_{\ratom}$ be the set of all possible functions $g_{\ratom}$ from $\{0, \ldots, k\}$ to $\dom{\extdat{D}{E}{V}}$. Since $q$ is assumed to be fixed, and therefore $k$ is fixed, the total number of such functions is only polynomial in the size of $D$, $E$, and $V$. Let $\ratom_1, \ldots, \ratom_n$ be the relational atoms occurring in $q$. From what said above, we have that $\extdat{D}{E}{V} \models q$ if and only if there exists functions $(g_{\ratom_1},\ldots,g_{\ratom_n})$ in the Cartesian product $G_{\ratom_1} \times \ldots \times G_{\ratom_n}$ such that, once computed the function $h$ as specified by requirement 1.~of Definition~\ref{queryeval}, requirements 2., 3., and 4.~are all satisfied. It is not hard to verify that \myi computing the function $h$ from $(g_{\ratom_1},\ldots,g_{\ratom_n})$ as specified in requirement 1.~and \myii checking whether requirements 2., 3., and 4.~are satisfied by $h$ and $(g_{\ratom_1},\ldots,g_{\ratom_n})$ can be both done in polynomial time. Since $q$ is assumed to be fixed, and therefore the cardinality of the Cartesian product $G_{\ratom_1} \times \ldots \times G_{\ratom_n}$ is only polynomial in the size of $D$, $E$, and $V$, we can immediately derive a polynomial time algorithm for checking whether $\extdat{D}{E}{V} \models q$ which just consider all the possible functions $(g_{\ratom_1},\ldots,g_{\ratom_n})$ in the Cartesian product $G_{\ratom_1} \times \ldots \times G_{\ratom_n}$. If for some $(g_{\ratom_1},\ldots,g_{\ratom_n}) \in G_{\ratom_1} \times \ldots \times G_{\ratom_n}$ requirements 2., 3., and 4.~are all satisfied after computing $h$ as specified in requirement 1., then we return \true (because $\extdat{D}{E}{V} \models q$); otherwise, we return \false (because $\extdat{D}{E}{V} \not\models q$). 
\end{proof}

\begin{lemma}\label{lem:monotone}
    Let $q(\vec{x})$ be an $n$-ary query (possibly involving similarity atoms) over a schema $\S$, $D$ be an $\S$-database, $\vec{c}$ be an $n$-tuple of constants, $E$ and $E'$ be two equivalence relations on $\ObjD(D)$, and $V$ and $V'$ be two equivalence relations on $\StrPos(D)$. If $\vec{c} \in q(\extdat{D}{E}{V})$ and $E \cup V \subseteq E' \cup V'$, then $\vec{c} \in q(\extdat{D}{E'}{V'})$.
\end{lemma}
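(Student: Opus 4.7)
The plan is to reduce to the Boolean case and then argue that a satisfaction witness for $\extdat{D}{E}{V} \models q[\vec{c}]$ lifts to one for $\extdat{D}{E'}{V'} \models q[\vec{c}]$ by replacing each set appearing at a matched position with the corresponding (larger) set induced by $E'$ and $V'$. Concretely, by the definition of answers to non-Boolean queries, it suffices to prove the lemma for Boolean queries $q$ (possibly with similarity atoms, but no inequality atoms, since the hypothesis excludes them). So assume $\extdat{D}{E}{V} \models q$ with witnessing maps $h$ and $\{g_\pi\}_{\pi \in q}$ satisfying the four conditions of Definition~\ref{queryeval}.

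First I would observe the following structural fact about induced databases: for every fact $R(t, c_1, \ldots, c_k) \in D$, the corresponding fact in $\extdat{D}{E}{V}$ is $R(\{t\}, S_1, \ldots, S_k)$ where $S_i$ is either the $E$-class of $c_i$ (if $i$ is an object position) or the $V$-class of $\langle t, i\rangle$ (if $i$ is a value position). Since $E \subseteq E'$ and $V \subseteq V'$, the corresponding fact in $\extdat{D}{E'}{V'}$ has the form $R(\{t\}, S_1', \ldots, S_k')$ with $S_i \subseteq S_i'$ for every $i$. For each relational atom $\pi \in q$, the atom $R(g_\pi(0), g_\pi(1), \ldots, g_\pi(k)) \in \extdat{D}{E}{V}$ originates from a unique fact of $D$; I would set $g_\pi'(i)$ to be the corresponding $S_i'$ (and $g_\pi'(0) = g_\pi(0)$ for the tid position). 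This yields $g_\pi(i) \subseteq g_\pi'(i)$ for every atom $\pi$ and every position $i$, and ensures $R(g_\pi'(0), \ldots, g_\pi'(k)) \in \extdat{D}{E'}{V'}$, establishing the first half of Point~2.

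Next I would define $h'$ from the $g_\pi'$ as prescribed by Point~1, namely $h'(a) = \{a\}$ for $a \in \qcons(q)$ and $h'(z) = \bigcap g_\pi'(i)$ (over occurrences of $z$) for $z \in \vars(q)$. From $g_\pi(i) \subseteq g_\pi'(i)$ I would get $h(z) \subseteq h'(z)$ by monotonicity of intersection, and since $h(z) \neq \emptyset$ by assumption, also $h'(z) \neq \emptyset$, so $h'$ takes values in $2^{\dcons(D)} \setminus \{\emptyset\}$. The constant condition in Point~2, $u_i \in g_\pi'(i)$ for $u_i \in \qcons(q)$, follows from $u_i \in g_\pi(i) \subseteq g_\pi'(i)$. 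For similarity atoms (Point~4), any witnesses $c \in h(u)$, $c' \in h(u')$ with $c \approx c'$ remain valid because $h(u) \subseteq h'(u)$ and $h(u') \subseteq h'(u')$. Point~3 is vacuous since the lemma's hypothesis restricts to queries without inequality atoms.

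\paragraph{Main obstacle.}
The only subtle point is Point~3: if inequality atoms were allowed, then enlarging $h(z)$ to $h'(z)$ could destroy disjointness $h(z) \cap h(z') = \emptyset$, so monotonicity would fail in general; the statement of the lemma accordingly restricts to similarity atoms. Apart from this, the argument is a routine verification that enlarging the equivalence relations only enlarges the candidate sets at each position, and all the conditions of Definition~\ref{queryeval} are preserved under such enlargement (for the remaining conditions).
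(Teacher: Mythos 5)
Your proof is correct and follows essentially the same route as the paper's: both rest on the observation that enlarging $E\cup V$ only enlarges each argument set of each induced fact pointwise, so a witness for $\extdat{D}{E}{V}\models q[\vec{c}]$ lifts to one for $\extdat{D}{E'}{V'}\models q[\vec{c}]$, with the absence of inequality atoms being the only place monotonicity could fail. The paper leaves the verification of the conditions of Definition~\ref{queryeval} implicit (``it is not hard to see''), whereas you spell out the construction of the new witnesses $g'_\pi$ and $h'$ explicitly.
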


\begin{proof}
    Suppose that $E \cup V \subseteq E' \cup V'$ and $\vec{c} \in q(\extdat{D}{E}{V})$, \ie $\extdat{D}{E}{V} \models q[\vec{c}]$. Since $E \cup V \subseteq E' \cup V'$, we can immediately derive the following: for each (extended) fact of the form $R(\{t\},C_1,\ldots,C_k)$ occurring in the extended database $\extdat{D}{E}{V}$, we have a corresponding (extended) fact of the form $R(\{t\},C'_1,\ldots,C'_k)$ occurring in the extended database $\extdat{D}{E'}{V'}$ such that $C_i \subseteq C'_i$ holds for every $1 \leq i \leq k$. Thus, since $q$ does not contain inequality atoms and since $\extdat{D}{E}{V} \models q[\vec{c}]$ holds by assumption, due to the previous observation and the notion of evaluation of Boolean queries over extended databases (Definition~\ref{queryeval}), it is not hard to see that $\extdat{D}{E'}{V'} \models q[\vec{c}]$ as well, \ie $\vec{c} \in q(\extdat{D}{E'}{V'})$.
\end{proof}

\begin{lemma}\label{lem:Preservation}
    Let $\E$ be an ER specification over a schema $\S$, $D$ be an $\S$-database, $q(\vec{x})$ be a CQ over $\S$, and $\vec{c}$ a tuple of constants. We have that $\vec{c} \in q(\extdat{D}{E}{V})$ for some $\tup{E,V} \in \MaxSOL(D,\E)$ if and only if $\vec{c} \in q(\extdat{D}{E'}{V'})$ for some $\tup{E',V'} \in \SOL(D,\E)$.
\end{lemma}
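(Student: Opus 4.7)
The plan is to handle the two directions separately. The forward direction $(\Rightarrow)$ is immediate from the definitions: any maximal solution is in particular a solution, so if $\vec{c} \in q(\extdat{D}{E}{V})$ holds for some $\tup{E,V} \in \MaxSOL(D,\E)$, then $\tup{E',V'}:=\tup{E,V}$ witnesses the right-hand side. The real content is in the backward direction $(\Leftarrow)$, where the goal is: given any solution $\tup{E',V'}$ such that $\vec{c} \in q(\extdat{D}{E'}{V'})$, produce a maximal solution $\tup{E,V}$ that still makes $\vec{c}$ an answer to $q$.

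For the backward direction I will proceed in two steps. First, I will argue that every solution can be extended to a maximal solution. Concretely, consider the set $\mathcal{S}(E',V') := \{\tup{E'',V''} \in \SOL(D,\E) \mid E' \subseteq E'' \text{ and } V' \subseteq V''\}$. Since $E$-components are equivalence relations on the finite set $\ObjD(D)$ and $V$-components on the finite set $\StrPos(D)$, and since these two components are disjoint as sets of pairs (one consists of object pairs, the other of cell pairs), the collection $\mathcal{S}(E',V')$ is finite. It is non-empty because $\tup{E',V'} \in \mathcal{S}(E',V')$, so it admits a $\subseteq$-maximal element $\tup{E,V}$ (maximal in the sense that $E \cup V$ is maximal among the unions ranging over $\mathcal{S}(E',V')$).

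Second, I will show that any such $\tup{E,V}$ is in fact a maximal solution in the sense of Definition~\ref{def:maxsol}, not merely maximal among extensions of $\tup{E',V'}$. Suppose for contradiction there existed $\tup{E^*,V^*} \in \SOL(D,\E)$ with $E \cup V \subsetneq E^* \cup V^*$. Then since $E' \cup V' \subseteq E \cup V \subsetneq E^* \cup V^*$, we would have $\tup{E^*,V^*} \in \mathcal{S}(E',V')$, contradicting maximality of $\tup{E,V}$ in that set. Hence $\tup{E,V} \in \MaxSOL(D,\E)$.

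Finally, to transfer the answer $\vec{c}$, I invoke Lemma~\ref{lem:monotone}. Since $q$ is a CQ and therefore contains no inequality atoms, and since $E' \cup V' \subseteq E \cup V$, the hypothesis $\vec{c} \in q(\extdat{D}{E'}{V'})$ yields $\vec{c} \in q(\extdat{D}{E}{V})$, completing the proof. No step here is genuinely hard; the main thing to be careful about is recognising that maximality within the upward-closed set $\mathcal{S}(E',V')$ automatically lifts to global maximality, which relies on the transitivity of set inclusion, and that the monotonicity lemma is applicable precisely because CQs lack inequality atoms.
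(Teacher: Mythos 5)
Your proof is correct and follows essentially the same route as the paper's: the forward direction via $\MaxSOL(D,\E)\subseteq\SOL(D,\E)$, and the backward direction by extending the given solution to a maximal one and transferring the answer with Lemma~\ref{lem:monotone} (applicable because CQs lack inequality atoms). The only difference is that you explicitly justify, via finiteness of the candidate space, the existence of a maximal extension of any solution, a step the paper's proof asserts without elaboration.
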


\begin{proof}
    First, suppose that $\vec{c} \in q(\extdat{D}{E}{V})$ holds for some $\tup{E,V} \in \MaxSOL(D,\E)$. Then, since $\MaxSOL(D,\E) \subseteq \SOL(D,\E)$, we have that $\vec{c} \in q(\extdat{D}{E}{V})$ for some $\tup{E,V} \in \SOL(D,\E)$ as well.

    Now, suppose that $\vec{c} \in q(\extdat{D}{E}{V})$, \ie $\extdat{D}{E}{V} \models q[\vec{c}]$, for some $\tup{E,V} \in \SOL(D,\E)$. Since $\tup{E,V} \in \SOL(D,\E)$, it follows that either $\tup{E,V} \in \MaxSOL(D,\E)$ or there exists at least one $\tup{E',V'}$ such that $E \cup V \subsetneq E' \cup V'$ and $\tup{E',V'} \in \MaxSOL(D,\E)$. In the former case, we trivially derive that $\vec{c} \in q(\extdat{D}{E}{V})$ for some $\tup{E,V} \in \MaxSOL(D,\E)$ and we are done. In the latter case, since $E \cup V \subsetneq E' \cup V'$, $\vec{c} \in q(\extdat{D}{E}{V})$, and $q(\vec{x})$ does not contain inequality atoms, by Lemma~\ref{lem:monotone} we derive that $\vec{c} \in q(\extdat{D}{E'}{V'})$ as well. So, we have that $\vec{c} \in q(\extdat{D}{E'}{V'})$ for some $\tup{E',V'} \in \MaxSOL(D,\E)$ also in this case, as required. 
\end{proof}

\begin{lemma}\label{lem:NoMoreMerg}
    Let $\delta$ be a denial constraint over a schema $\S$ that do not use inequality atoms, $D$ be an $\S$-database, $E$ and $E'$ be two equivalence relations on $\ObjD(D)$, and $V$ and $V'$ be two equivalence relations on $\StrPos(D)$. If $\extdat{D}{E}{V} \not\models \delta$ and $E \cup V \subseteq E' \cup V'$, then $\extdat{D}{E'}{V'}\not\models \delta$. 
\end{lemma}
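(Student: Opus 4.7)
The plan is to reduce the claim directly to the monotonicity result established in Lemma~\ref{lem:monotone}. First, I unfold the definition: writing $\delta = \exists \vec{y} \per \varphi(\vec{y}) \rightarrow \bot$, the hypothesis $\extdat{D}{E}{V} \not\models \delta$ is equivalent, by the bullet defining $\models$ for DCs, to $\extdat{D}{E}{V} \models \exists \vec{y} \per \varphi(\vec{y})$. The assumption that $\delta$ contains no inequality atoms means that $\varphi(\vec{y})$ is a Boolean CQ whose only non-relational atoms, if any, are similarity atoms, which is precisely the setting covered by Lemma~\ref{lem:monotone}.

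Next, I view $\exists \vec{y} \per \varphi(\vec{y})$ as a $0$-ary query $q()$ and take the empty tuple $\vec{c} = ()$ as the candidate answer. The hypothesis $E \cup V \subseteq E' \cup V'$ is exactly the inclusion assumed in Lemma~\ref{lem:monotone} (note that $E$ and $V$ live in disjoint sets, namely $\ObjD(D)\times\ObjD(D)$ and $\StrPos(D)\times\StrPos(D)$, so this inclusion transparently gives $E \subseteq E'$ and $V \subseteq V'$). Applying Lemma~\ref{lem:monotone} to $q$, $\vec{c}$, and the two pairs of equivalence relations yields $\extdat{D}{E'}{V'} \models \exists \vec{y} \per \varphi(\vec{y})$, and unfolding the semantics of DCs once more gives $\extdat{D}{E'}{V'} \not\models \delta$, as required.

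I do not expect any real obstacle: the whole content of the lemma is that the body of a DC without inequalities is a positive (up to similarity) existential formula, and such formulas are preserved under enlarging the induced sets of constants in each fact, which is precisely what larger equivalence relations do. The role of the ``no inequality atoms'' hypothesis is exactly to land inside the scope of Lemma~\ref{lem:monotone}; if inequality atoms were allowed, enlarging the sets could make some $h(z) \cap h(z') = \emptyset$ condition fail, breaking the monotonicity argument. Since we assume this hypothesis, the proof is essentially a one-line reduction to Lemma~\ref{lem:monotone}.
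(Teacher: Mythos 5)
Your proof is correct and follows essentially the same route as the paper's: unfold the semantics of denial constraints to restate the hypothesis as satisfaction of the Boolean CQ $\exists \vec{y} \per \varphi(\vec{y})$, observe that the absence of inequality atoms places this query within the scope of Lemma~\ref{lem:monotone}, and apply that monotonicity lemma to conclude. Your additional remarks (treating the body as a $0$-ary query with the empty answer tuple, and explaining why inequality atoms would break monotonicity) merely make explicit what the paper leaves implicit.
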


\begin{proof}
    Let $\delta=\exists \vec{y} \per \varphi(\vec{y}) \rightarrow \bot$. By definition, we have that $\extdat{D}{E}{V} \not\models \delta$ iff $\extdat{D}{\merO}{\merV} \models \exists \vec{y} \per \varphi(\vec{y})$. So, suppose that $E \cup V \subseteq E' \cup V'$ and $\extdat{D}{E}{V} \not\models \delta$, \ie $\extdat{D}{E}{V} \models \exists \vec{y} \per \varphi(\vec{y})$. Since by assumption $\delta$ is a denial constraint without inequality atoms, and therefore $\exists \vec{y} \per \varphi(\vec{y})$ is a CQ (without inequality atoms), and since $\extdat{D}{E}{V} \models \exists \vec{y} \per \varphi(\vec{y})$ and $E \cup V \subseteq E' \cup V'$, by Lemma~\ref{lem:monotone} we derive that $\extdat{D}{E'}{V'} \models \exists \vec{y} \per \varphi(\vec{y})$ as well, and therefore $\extdat{D}{E'}{V'}\not\models \delta$, as required.
\end{proof}

\begin{remark}
    As already specified in the main text of the paper, the results claimed in Theorem~\ref{th:Complexity} are provided with respect to the \emph{data complexity} measure, \ie the complexity is with respect to the size of the database $D$ (and also the sets $E$ and $V$ for those decision problems that require it).
\end{remark}

\Complexity*

\begin{proof}
    \textbf{Lower bounds:} All the lower bounds immediately follow from~\cite{BienvenuPODS22}, and therefore they hold already for specifications without rules for values. Furthermore, as already observed in the main text of the paper, due to Theorem~\ref{simulation}, all the lower bounds hold even for specifications without rules for objects. 

    \textbf{Upper bounds:} We now provide all the upper bounds claimed in the statement of the theorem.

    \begin{center}
        \underline{\recpb is in $\PTIME$.}
    \end{center}
    \begin{proof}
        Given an ER specification $\E=\tup{\hsrulesO,\hsrulesV,\Delta}$ over a schema $\S$, an $\S$-database $D$, a binary relation $E$ on $\ObjD(D)$, and a binary relation $V$ on $\StrPos(D)$, following Definition~\ref{def:SOL}, we have that $\tup{E,V} \in \SOL(D,\Sigma)$ if and only if \myi $\tup{E,V}$ is a candidate solution for $(D,\Sigma)$, and \myii $\extdat{D}{E}{V}\models \HRO$, $\extdat{D}{E}{V} \models \HRV$, and $\extdat{D}{E}{V} \models \Delta$. 
    
        To check condition \myii, it is sufficient to check that $\extdat{D}{E}{V} \models \hro$ for each $\hro \in \HRO$, $\extdat{D}{E}{V} \models \hrv$ for each $\hrv \in \HRV$, and that $\extdat{D}{E}{V} \models \delta$ for each $\delta \in \Delta$. By definition, checking whether $\extdat{D}{E}{V} \models \hro$ for a hard rule $\hro=q(x,y) \harrow \linkO(x,y)$ amounts to check that each pair of object constants $(c,c')$ with $(c,c') \in q(\extdat{D}{E}{V})$ is such that $(c,c') \in E$. Thus, we can pick all possible pairs of object constants $(c,c')$ occurring in $\ObjD(D)$ and check the following: if $(c,c') \in q(\extdat{D}{E}{V})$, then $(c,c') \in E$. Since, due to Lemma~\ref{lem:Eval}, we know that it is possible to verify whether $\extdat{D}{E}{V} \models q[c,c']$, i.e. whether $(c,c') \in q(\extdat{D}{E}{V})$, in polynomial time in the size of $D$, $E$, and $V$, we immediately derive that checking whether $\extdat{D}{E}{V} \models \hro$ for a $\hro \in \HRO$ can be done in polynomial time in the size of $D$, $E$, and $V$. It follows that checking whether $\extdat{D}{E}{V} \models \HRO$ can be done in polynomial time in the size of $D$, $E$, and $V$. Similar considerations apply for checking whether $\extdat{D}{E}{V} \models \HRV$ and whether $\extdat{D}{E}{V} \models \Delta$.
    
        We conclude by showing that checking condition \myi, \ie $\tup{E,V}$ is a candidate solution for $(D,\Sigma)$, can be done in polynomial time in the size of $D$, $E$, and $V$ as well. To this aim, we can start with $E'\coloneqq \eqrel(\emptyset,\ObjD(D))$ and $V' \coloneqq \eqrel(\emptyset,\StrPos(D))$ and repeat the following step until a fixpoint is reached: if there is some pair $(o,o') \in E$ (\resp $(\tup{t,i},\tup{t',i'}) \in V$) such that $(o,o') \in q(\extdat{D}{E}{V})$ (\resp $(t,t') \in q(\extdat{D}{E}{V})$) for some $q(x,y) \rightarrow \linkO(x,y) \in \hsrulesO$ (\resp $q(x_t,y_t) \rightarrow \linkV(\tup{x_t,i},\tup{y_t,i'}) \in \hsrulesV$) and $(o,o') \not \in E'$ (\resp $(\tup{t,i},\tup{t',i'}) \not \in V'$), then set $E'\coloneqq\eqrel(E' \cup \{(o,o')\},\ObjD(D))$ (\resp $V'\coloneqq\eqrel(V' \cup \{(\tup{t,i},\tup{t',i'})\},\StrPos(D))$). Once the fixpoint is reached, we can check whether the given $E$ and $V$ coincides with the computed $E'$ and $V'$, respectively. It is easy to verify that the given $\tup{E,V}$ is a candidate solution for $(D,\Sigma)$ if and only if $E$ and $V$ coincide with the computed $E'$ and $V'$, respectively, and that checking this, as well as computing $E'$ and $V'$ as described above, can be done in polynomial time in the size of $D$, $E$, and $V$. From the above considerations, it is immediate to derive a deterministic procedure that checks whether $\tup{E,V} \in \SOL(D,\E)$ and that runs in polynomial time in the size of $D$, $E$, and $V$. 
    \end{proof}

    \begin{center}
        \underline{\existpb is in $\NP$.}
    \end{center}
    \begin{proof}
        Given an ER specification $\E=\tup{\hsrulesO,\hsrulesV,\Delta}$ over a schema $\S$ and an $\S$-database $D$, we first guess a pair $\tup{E,V}$, where $E$ is a set of pairs of object constants from $\ObjD(D)$ and $V$ is a set of pairs of cells from $\StrPos(D)$. Then, in polynomial time in the size of $D$, $E$, and $V$ (and therefore, in polynomial time in the size of $D$ as well because both the guessed $E$ and $V$ are polynomially related to $D$) we check whether $\tup{E,V} \in \SOL(D,\E)$ using exactly the same technique illustrated above in the upper bound provided for \recpb. So, checking whether $\SOL(D,\Sigma) \neq \emptyset$ can be done in $\NP$ in the size of $D$.
    \end{proof}

    \begin{center}
        \underline{\maxrecpb is in $\coNP$.}
    \end{center}
    \begin{proof}
        Given an ER specification $\E=\tup{\hsrulesO,\hsrulesV,\Delta}$ over a schema $\S$, an $\S$-database $D$, a binary relation $E$ on $\ObjD(D)$, and a binary relation $V$ on $\StrPos(D)$, we now show how to check whether $\tup{E,V} \not \in \MaxSOL(D,\E)$ in $\NP$ in the size of $D$, $E$, and $V$. First, note that, by definition, $\tup{E,V} \not \in \MaxSOL(D,\E)$ if and only if either $\tup{E,V} \not \in \SOL(D,\E)$ or there exists $\tup{E'',V''}$ such that $\tup{E'',V''} \in \SOL(D,\E)$ and $E \cup V \subsetneq E'' \cup V''$.

        We first guess a pair $\tup{E',V'}$, where $E'$ is a set of pairs of object constants from  $\ObjD(D)$ and $V'$ is a set of pairs of cells from $\StrPos(D)$. Then, we check whether \myi $\tup{E,V} \not \in \SOL(D,\E)$ or \myii $\tup{E \cup E',V \cup V'} \in \SOL(D,E)$ and there is at least one $\alpha$ (either a pair of object constants or a pair of cells) such that $\alpha \in E' \cup V'$ but $\alpha \not \in E \cup V$. If either condition \myi or condition \myii is satisfied, then we return \true; otherwise, we return \false. Correctness of the above procedure for deciding $\tup{E,V} \not \in \MaxSOL(D,\E)$ is trivial.
    
        Due to the upper bound provided above for \recpb, observe that \myi checking whether $\tup{E,V} \not \in \SOL(D,\E)$ can be done in polynomial time in the size of $D$, $E$, and $V$, and, moreover, \myii checking whether $\tup{E \cup E',V \cup V'} \in \SOL(D,E)$ can be done in polynomial time in the size of $D$, $E \cup E'$, and $V \cup V'$ (and therefore, in polynomial time in the size of $D$, $E$, and $V$ as well because both the guessed $E'$ and $V'$ are polynomially related to $D$). So, checking whether $\tup{E,V} \not \in \MaxSOL(D,\E)$ can be done in $\NP$ in the size of $D$, $E$, and $V$. 
    \end{proof}

    \begin{center}
        \underline{\pmergepb is in $\NP$.}
    \end{center}
    \begin{proof}
        Given an ER specification $\E=\tup{\hsrulesO,\hsrulesV,\Delta}$ over a schema $\S$, an $\S$-database $D$, and a candidate merge $\alpha$ (either a pair of object constants of the form $(o,o')$ or a pair of cells of the form $(\tup{t,i},\tup{t',i'})$), we first guess a pair $\tup{E,V}$ such that $\alpha \in E \cup V$, where $E$ is a set of pairs of object constants from $\ObjD(D)$ and $V$ is a set of pairs of cells from $\StrPos(D)$. Then, in polynomial time in the size of $D$, $E$, and $V$ (and therefore, in polynomial time in the size of $D$ as well because both the guessed $E$ and $V$ are polynomially related to $D$) we check whether $\tup{E,V} \in \SOL(D,\E)$ using exactly the same technique illustrated above in the upper bound provided for \recpb. If $\tup{E,V} \in \SOL(D,\E)$, then we return \true; otherwise, we return \false. 
        
        Correctness of the above procedure is guaranteed by the next observation which immediately follows by the definition of maximal solutions (Definition~\ref{def:maxsol}): if a pair $\alpha$ is such that $\alpha \in E \cup V$ for some $\tup{E,V} \in \SOL(D,\E)$, then either \myi $\tup{E,V} \in \MaxSOL(D,\E)$ or \myii there exists at least one $\tup{E',V'}$ such that $E \cup V \subsetneq E' \cup V'$ and $\tup{E',V'} \in \MaxSOL(D,\E)$. In both cases, we will have that $\alpha \in E'' \cup V''$ for some $\tup{E'',V''} \in \MaxSOL(D,\E)$. So, checking whether a candidate merge $\alpha$ belongs to $E \cup V$ for some $\tup{E,V} \in \MaxSOL(D,\E)$ can be done in $\NP$ in the size of $D$.
    \end{proof}

    \begin{center}
        \underline{\cmergepb is in $\PItwop$.}
    \end{center}
    \begin{proof}
        Given an ER specification $\E=\tup{\hsrulesO,\hsrulesV,\Delta}$ over a schema $\S$, an $\S$-database $D$, and a candidate merge $\alpha$ (either a pair of object constants of the form $(o,o')$ or a pair of cells of the form $(\tup{t,i},\tup{t',i'})$), we now show how to check whether $\alpha \not \in E \cup V$ for some $\tup{E,V} \in \MaxSOL(D,\E)$ in $\SItwop$ in the size of $D$. 

        We first guess a pair $\tup{E,V}$ such that $\alpha \not\in E \cup V$, where $E$ is a set of pairs of object constants from $\ObjD(D)$ and $V$ is a set of pairs of cells from $\StrPos(D)$. Then, we check whether \myi $\SOL(D,\E)=\emptyset$ or \myii $\tup{E,V} \in \MaxSOL(D,\E)$. If either condition \myi or condition \myii is satisfied, then we return \true; otherwise, we return \false. Correctness of the above procedure for deciding whether $\alpha \not\in E \cup V$ for some $\tup{E,V} \in \MaxSOL(D,\E)$ is trivial.

        Observe that condition \myi can be checked by means of a $\coNP$-oracle, since due to the upper bound provided above for \existpb checking whether $\SOL(D,\E)=\emptyset$ can be done in $\coNP$ in the size of $D$. As for condition \myii, it is possible to check whether $\tup{E,V} \in \MaxSOL(D,\E)$ again by means of a $\coNP$-oracle, since due to the upper bound provided above for \maxrecpb checking whether $\tup{E,V} \in \MaxSOL(D,\E)$ can be done in $\coNP$ in the size of $D$, $E$, and $V$ (and therefore, in $\coNP$ in the size of $D$ as well because both the guessed $E$ and $V$ are polynomially related to $D$). So, overall, checking whether a candidate merge $\alpha$ is such that $\alpha \not \in E \cup V$ for some $\tup{E,V} \in \MaxSOL(D,\E)$ can be done in $\SItwop$ in the size of $D$.
    \end{proof}

    \begin{center}
        \underline{\possanspb is in $\NP$.}
    \end{center}
    \begin{proof}
        Given an ER specification $\E=\tup{\hsrulesO,\hsrulesV,\Delta}$ over a schema $\S$, an $\S$-database $D$, an $n$-ary CQ $q(\vec{x})$ over $\S$, and an $n$-tuple $\vec{c}$ of constants, we first guess a pair $\tup{E,V}$, where $E$ is a set of pairs of object constants from $\ObjD(D)$ and $V$ is a set of pairs of cells from $\StrPos(D)$. Then, we check whether \myi $\tup{E,V} \in \SOL(D,\E)$ and \myii $\vec{c} \in q(\extdat{D}{E}{V})$, \ie $\extdat{D}{E}{V} \models q[\vec{c}]$. If both conditions \myi and \myii are satisfied, then we return \true; otherwise, we return \false. 
        
        Correctness of the above procedure for deciding whether $\vec{c} \in q(\extdat{D}{E}{V})$ for some $\tup{E,V} \in \MaxSOL(D,\E)$ is guaranteed by Lemma~\ref{lem:Preservation}.

        Observe that checking whether $\tup{E,V} \in \SOL(D,\E)$ can be done in polynomial time in the size of $D$, $E$, and $V$ (and therefore, in polynomial time in the size of $D$ as well because both the guessed $E$ and $V$ are polynomially related to $D$) using exactly the same technique illustrated above in the upper bound provided for \recpb. Finally, due to Lemma~\ref{lem:Eval}, checking whether $\extdat{D}{E}{V} \models q[\vec{c}]$ can be done in polynomial time in the size of $D$, $E$, and $V$ (and therefore, in polynomial time in the size of $D$ as well because both the guessed $E$ and $V$ are polynomially related to $D$) as well. So, overall, checking whether $\vec{c} \in q(\extdat{D}{E}{V})$ for some $\tup{E,V} \in \MaxSOL(D,\E)$ can be done in $\NP$ in the size of $D$, as required.
    \end{proof}

    \begin{center}
        \underline{\certanspb is in $\PItwop$.}
    \end{center}
    \begin{proof}
        Given an ER specification $\E=\tup{\hsrulesO,\hsrulesV,\Delta}$ over a schema $\S$, an $\S$-database $D$, an $n$-ary CQ $q(\vec{x})$ over $\S$, and an $n$-tuple $\vec{c}$ of constants, we now show how to check whether $\vec{c} \not \in q(\extdat{D}{E}{V})$ for some $\tup{E,V} \in \MaxSOL(D,\E)$ in $\SItwop$ in the size of $D$.

        We first guess a pair $\tup{E,V}$, where $E$ is a set of pairs of object constants from $\ObjD(D)$ and $V$ is a set of pairs of cells from $\StrPos(D)$. Then, we check whether \myi $\SOL(D,\E)=\emptyset$ or \myii $\tup{E,V} \in \MaxSOL(D,\E)$ and $\vec{c} \not \in q(\extdat{D}{E}{V})$, \ie $\extdat{D}{E}{V} \not\models q[\vec{c}]$. If either condition \myi or condition \myii is satisfied, then we return \true; otherwise, we return \false. Correctness of the above procedure for deciding $\vec{c} \not \in q(\extdat{D}{E}{V})$ for some $\tup{E,V} \in \MaxSOL(D,\E)$ is trivial.

        Observe that condition \myi can be checked by means of a $\coNP$-oracle, since due to the upper bound provided above for \existpb checking whether $\SOL(D,\E)=\emptyset$ can be done in $\coNP$ in the size of $D$. As for condition \myii, it is possible to check whether $\tup{E,V} \in \MaxSOL(D,\E)$ again by means of a $\coNP$-oracle, since due to the upper bound provided above for \maxrecpb checking whether $\tup{E,V} \in \MaxSOL(D,\E)$ can be done in $\coNP$ in the size of $D$, $E$, and $V$ (and therefore, in $\coNP$ in the size of $D$ as well because both the guessed $E$ and $V$ are polynomially related to $D$). Finally, due to Lemma~\ref{lem:Eval}, checking whether $\extdat{D}{E}{V} \not\models q[\vec{c}]$ can be done in polynomial time in the size of $D$, $E$, and $V$ (and therefore, in polynomial time in the size of $D$ as well because both the guessed $E$ and $V$ are polynomially related to $D$) as well. So, overall, checking whether $\vec{c} \not \in q(\extdat{D}{E}{V})$ for some $\tup{E,V} \in \MaxSOL(D,\E)$ can be done in $\SItwop$ in the size of $D$.
    \end{proof}

    \begin{center}
        \underline{For specifications that do not use inequality atoms in denial constraints, \existpb is in $\PTIME$.}
    \end{center}
    \begin{proof}
        Given an ER specification $\E=\tup{\hsrulesO,\hsrulesV,\Delta}$ over a schema $\S$ such that all the denial constraints in $\Delta$ have no inequality atoms and an $\S$-database $D$, we start with $E\coloneqq \eqrel(\emptyset,\ObjD(D))$ and $V\coloneqq \eqrel(\emptyset,\StrPos(D))$ and repeat the following step until a fixpoint is reached: if there is some pair $(o,o')$ (\resp $(\tup{t,i},\tup{t',i'})$) such that $(o,o') \in q(\extdat{D}{E}{V})$ (\resp $(t,t') \in q(\extdat{D}{E}{V})$) for some $q(x,y) \harrow \linkO(x,y) \in \HRO$ (\resp $q(x_t,y_t) \harrow \linkV(\tup{x_t,i},\tup{y_t,i'}) \in \HRV$) and $(o,o') \not \in E'$ (\resp $(\tup{t,i},\tup{t',i'}) \not \in V'$), then set $E'\coloneqq\eqrel(E' \cup \{(o,o')\},\ObjD(D))$ (\resp $V'\coloneqq\eqrel(V' \cup \{(\tup{t,i},\tup{t',i'})\},\StrPos(D))$). Once the fixpoint is reached, we can check whether the obtained $\tup{E,V}$ is a solution for $(D,\E)$ by simply checking whether $\extdat{D}{E}{V} \models \Delta$ (by construction, we already have that $\extdat{D}{E}{V} \models \HRO$, $\extdat{D}{E}{V} \models \HRV$, and $\tup{E,V}$ is a candidate solution for $(D,\E)$). If it is the case that $\extdat{D}{E}{V} \models \Delta$, then we return \true; otherwise, we return \false. 
        
        Clearly, due to Lemma~\ref{lem:Eval}, the above procedure can be carried out in polynomial time in the size of $D$. The correctness is guaranteed by the fact that, on the one hand, the computed $E$ (\resp $V$) will contain only those pairs of object constants (pairs of cells) that are necessary to satisfy the hard rules, and, on the other hand, by Lemma~\ref{lem:NoMoreMerg}, if $\extdat{D}{E}{V} \not\models \Delta$, then there is no way to add other pairs to $E \cup V$ in order to satisfy $\Delta$.
    \end{proof}

    \begin{center}
        \underline{For specifications that do not use inequality atoms in denial constraints, \maxrecpb is in $\PTIME$.}
    \end{center}
    \begin{proof}
        Given an ER specification $\E=\tup{\hsrulesO,\hsrulesV,\Delta}$ over a schema $\S$ such that all the denial constraints in $\Delta$ have no inequality atoms, an $\S$-database $D$, a binary relation $E$ on $\ObjD(D)$, and a binary relation $V$ on $\StrPos(D)$, as a first step we check whether $\tup{E,V} \in \SOL(D,\E)$ using exactly the same technique illustrated above in the upper bound provided for \recpb. If the given $\tup{E,V}$ is not a solution for $(D,\E)$, then we return \false; otherwise, we continue as follows. We collect in a set $S$ all those pairs $\alpha$ of object constants $(o,o')$ and cells $(\tup{t,i},\tup{t',i'})$ such that $\alpha \not \in E \cup V$ and either there is a soft rule for objects in $\SRO$ of the form $q(x,y)\sarrow \linkO(x,y)$ with $(o,o') \in q(\extdat{D}{E}{V})$ or there is a soft rule for values in $\SRV$ of the form $q(x_t,y_t) \sarrow \linkV(\tup{x_t,i},\tup{y_t,i'})$ with $(t,t') \in q(\extdat{D}{E}{V})$.
        
        Then, for each possible $\alpha \in S$, we do the following. If $\alpha$ is a pair of object constants, then we start with $E'\coloneqq \eqrel(E \cup \{\alpha\},\ObjD(D))$ and $V' \coloneqq V$; otherwise (\ie $\alpha$ is a pairs of cells), we start with $E' \coloneqq E$ and $V' \coloneqq \eqrel(V \cup \{\alpha\},\StrPos(D))$. In both cases, we repeat the following until a fixpoint is reached: if there is some pair $(o,o')$ (\resp $(\tup{t,i},\tup{t',i'})$) such that $(o,o') \in q(\extdat{D}{E}{V})$ (\resp $(t,t') \in q(\extdat{D}{E}{V})$) for some $q(x,y) \harrow \linkO(x,y) \in \HRO$ (\resp $q(x_t,y_t) \harrow \linkV(\tup{x_t,i},\tup{y_t,i'}) \in \HRV$) and $(o,o') \not \in E'$ (\resp $(\tup{t,i},\tup{t',i'}) \not \in V'$), then set $E'\coloneqq\eqrel(E' \cup \{(o,o')\},\ObjD(D))$ (\resp $V'\coloneqq\eqrel(V' \cup \{(\tup{t,i},\tup{t',i'})\},\StrPos(D))$). Once the fixpoint is reached, we check whether the obtained $\tup{E',V'}$ is such that $\extdat{D}{E'}{V'} \models \Delta$. If this is the case for some $\alpha \in S$, then we return \false; otherwise, we return \true. 

        Clearly, due to Lemma~\ref{lem:Eval}, the above procedure can be carried out in polynomial time in the size of $D$, $E$, and $V$, since collecting those pairs in $S$ can be done in polynomial time in the size of $D$, $E$, and $V$. The correctness is guaranteed by the following observation. Given a solution $\tup{E,V}$ for $(D,\E)$, to establish that $\tup{E,V} \not \in \MaxSOL(D,\E)$ it is sufficient to apply a ``minimal'' extend to $\tup{E,V}$ and see whether such minimal extend leads to a solution for $(D,\E)$. More precisely, the minimal extend consists in either adding to $E$ a pair of object constants $\alpha \not \in E$ due to some soft rule for objects or adding to $V$ a pair of cells $\alpha \not \in V$ due to some soft rule for values (since $\tup{E,V} \in \SOL(D,\E)$, and thus $\extdat{D}{E}{V} \models \HRO$ and $\extdat{D}{E}{V} \models \HRV$, the minimal extend cannot be due to the hard rules), then compute a new pair $\tup{E',V'}$ based on $E \cup V \cup \{\alpha\}$ that satisfy all the hard rules (the added $\alpha$ can now trigger other merges that must be included to satisfy the hard rules), and finally check that the obtained $\tup{E',V'}$ is such that $\extdat{D}{E'}{V'} \models \Delta$. If for no such $\alpha$ this is the case, \ie every time we try to extend $E$ or $V$ with some $\alpha \in S$ we end up with an $\tup{E',V'}$ such that $\extdat{D}{E'}{V'} \not \models \Delta$, then, due to Lemma~\ref{lem:NoMoreMerg}, we get that every candidate solution $\tup{E'',V''}$ for $(D,\E)$ for which \myi $E \cup V \subsetneq E'' \cup V''$, \myii $\extdat{D}{E''}{V''} \models \HRO$, and \myiii $\extdat{D}{E''}{V''} \models \HRV$ is be such that $\extdat{D}{E''}{V''} \not \models \Delta$.
    \end{proof}

    \begin{center}
        \underline{For specifications that do not use inequality atoms in denial constraints, \cmergepb is in $\coNP$.}
    \end{center}
    \begin{proof}
        Given an ER specification $\E=\tup{\hsrulesO,\hsrulesV,\Delta}$ over a schema $\S$ such that all the denial constraints in $\Delta$ have no inequality atoms, an $\S$-database $D$, and a candidate merge $\alpha$ (either a pair of object constants of the form $(o,o')$ or a pair of cells of the form $(\tup{t,i},\tup{t',i'})$), we now show how to check whether $\alpha \not \in E \cup V$ for some $\tup{E,V} \in \MaxSOL(D,\E)$ in $\NP$ in the size of $D$. 

        We first guess a pair $\tup{E,V}$ such that $\alpha \not\in E \cup V$, where $E$ is a set of pairs of object constants from $\ObjD(D)$ and $V$ is a set of pairs of cells from $\StrPos(D)$. Then, we check whether \myi $\SOL(D,\E)=\emptyset$ or \myii $\tup{E,V} \in \MaxSOL(D,\E)$. If either condition \myi or condition \myii is satisfied, then we return \true; otherwise, we return \false. Correctness of the above procedure for deciding whether $\alpha \not\in E \cup V$ for some $\tup{E,V} \in \MaxSOL(D,\E)$ is trivial.

        Observe that, since all the denial constraints in $\Delta$ have no inequality atoms, as already shown in the upper bound provided for \existpb in the case of specifications that do not use inequality atoms in denial constraints, condition \myi can be checked in polynomial time in the size of $D$, $E$, and $V$ (and therefore, in polynomial time in the size of $D$ as well because both the guessed $E$ and $V$ are polynomially related to $D$). Furthermore, as already shown in the upper bound provided for \maxrecpb in the case of specifications that do not use inequality atoms in denial constraints, checking whether $\tup{E,V} \in \MaxSOL(D,\E)$ can be done in polynomial time in the size of $D$, $E$, and $V$ (and therefore, in polynomial time in the size of $D$ as well because both the guessed $E$ and $V$ are polynomially related to $D$). 
        So, overall, checking whether a candidate merge $\alpha$ is such that $\alpha \not \in E \cup V$ for some $\tup{E,V} \in \MaxSOL(D,\E)$ can be done in $\NP$ in the size of $D$.
    \end{proof}

    \begin{center}
        \underline{For specifications that do not use inequality atoms in denial constraints, \certanspb is in $\coNP$.}
    \end{center}
    \begin{proof}
        Given an ER specification $\E=\tup{\hsrulesO,\hsrulesV,\Delta}$ over a schema $\S$ such that all the denial constraints in $\Delta$ have no inequality atoms, an $\S$-database $D$, an $n$-ary CQ $q(\vec{x})$ over $\S$, and an $n$-tuple $\vec{c}$ of constants, we now show how to check whether $\vec{c} \not \in q(\extdat{D}{E}{V})$ for some $\tup{E,V} \in \MaxSOL(D,\E)$ in $\NP$ in the size of $D$.

        We first guess a pair $\tup{E,V}$, where $E$ is a set of pairs of object constants from $\ObjD(D)$ and $V$ is a set of pairs of cells from $\StrPos(D)$. Then, we check whether \myi $\SOL(D,\E)=\emptyset$ or \myii $\tup{E,V} \in \MaxSOL(D,\E)$ and $\vec{c} \not \in q(\extdat{D}{E}{V})$, \ie $\extdat{D}{E}{V} \not\models q[\vec{c}]$. If either condition \myi or condition \myii is satisfied, then we return \true; otherwise, we return \false. Correctness of the above procedure for deciding whether $\vec{c} \not \in q(\extdat{D}{E}{V})$ for some $\tup{E,V} \in \MaxSOL(D,\E)$ is trivial.

        Observe that, since all the denial constraints in $\Delta$ have no inequality atoms, as already shown in the upper bound provided for \existpb in the case of specifications that do not use inequality atoms in denial constraints, condition \myi can be checked in polynomial time in the size of $D$, $E$, and $V$ (and therefore, in polynomial time in the size of $D$ as well because both the guessed $E$ and $V$ are polynomially related to $D$). Furthermore, as already shown in the upper bound provided for \maxrecpb in the case of specifications that do not use inequality atoms in denial constraints, checking whether $\tup{E,V} \in \MaxSOL(D,\E)$ can be done in polynomial time in the size of $D$, $E$, and $V$ (and therefore, in polynomial time in the size of $D$ as well because both the guessed $E$ and $V$ are polynomially related to $D$). Finally, due to Lemma~\ref{lem:Eval}, checking whether $\extdat{D}{E}{V} \not\models q[\vec{c}]$ can be done in polynomial time in the size of $D$, $E$, and $V$ (and therefore, in polynomial time in the size of $D$ as well because both the guessed $E$ and $V$ are polynomially related to $D$) as well. So, overall, checking whether $\vec{c} \not \in q(\extdat{D}{E}{V})$ for some $\tup{E,V} \in \MaxSOL(D,\E)$ can be done in $\NP$ in the size of $D$.
    \end{proof}
    \phantom\qedhere
\end{proof}
\newcommand{\NeqO}{\textit{NEqO}}
\newcommand{\NeqV}{\textit{NEqV}}
\newcommand{\ActiveO}{\textit{ActiveO}}
\newcommand{\ActiveV}{\textit{ActiveV}}

\thmASPSols*
%
\begin{proof}
   
Let $D$ be a database and $\E=\tup{\hsrulesO,\hsrulesV,\Delta}$ a specification.
We define a normal logic program \PSol whose answer sets can be used to generate solutions for $(D. \Sigma)$. 
First, we have rules  
$$
\begin{aligned}
\proj(u_t,i,v) \leftarrow&  R(u_t, v_1,\dots, v_{i-i}, v, v_{i+1},\dots, v_k )   \qquad(R/k \in \S, 1 \leq i \leq k ) \\
\obj(o)  \leftarrow & R(u_t, v_1,\dots, v_{i-i}, o, v_{i+1},\dots, v_k )   \qquad(R/k \in \S, 1 \leq i \leq k  \text{ and } \typef(R,j)= \Obj ) 
\end{aligned}
$$
such that ground instances of $\proj(t,i,c)$ of $\proj/3$ encode that $t[i]=c$, and $\obj/1$ stores the set of object constants in $D$.

We start by describing the translation of rules in $\hsrulesV$ 
For every CQ $q(x_t,y_t)$ in the body of a rule in $\hsrulesV$, let $\hat{q}(x_t,y_t)$ be constructed by 
replacing each occurrence of a non-distinguished variable in $q$ (excluding similarity atoms) with a fresh variable. We will write  $\occ(v)$ (resp.\ \Oocc) to denote the set of pairs $(u_t,i)$ such that $v$ occurs in a value (resp.\ object ) position $i$ in an atom $R(u_t,v_1, \dots, v_k)$ in $q$, and $v_{(u_t,i)}$ to denote the fresh variable introduced for replacing occurrence $(u_t,i)$ of $v$. After the renaming, we proceed as follows:

\begin{enumerate}
\item  for every join variable $v$ from $q$ occurring in some value position, take a pair of fresh variables $u'_t,k$ and add to
$\hat{q}$ the set of atoms $$\{EqV(u_t, i, u'_t, k) \mid (u_t,i) \in \occ(v) \} $$
 \item for every join variable $v$ from $q$ from $q$ occurring in some object position, take a fresh variable $v'$ and add to $\hat{q}$ the set of atoms $$\{\EqO(v_{(u_t,i)},v') \mid (u_t,i) \in \Oocc(v) \};$$ 
 \item for each similarity atom $v \approx w$, take a fresh variables $v',w'$ and replace that atom by the set of atoms
 $$\{\val(u_t, i, v' )| (u_t,i) \in \occ(v) \} \cup \{\val(u'_t, j, w' )| (u'_t,j) \in \occ(w) \} \cup \{v' \approx w'\}$$
 where $\val$
 is a predicate is defined by the rule:
 $$\val(u_t,i,v) \leftarrow \EqV(u_t,i,u'_t,j), \proj(u'_t,j,v)$$
 \end{enumerate}
Rules in $\hsrulesV$ are then translated as follows.
Every hard rule $q(x_t,y_t) \harrow\, \linkV(\tup{x_t,i},\tup{y_t,j})$ is translated into the rule  
\begin{equation}
\EqV(x_t,i,y_t,j) \leftarrow \hat{q}(x_t,y_t).
\end{equation}
Every soft  rule $q(x_t,y_t) \sarrow \, \linkV(\tup{x_t,i},\tup{y_t,j}) \in \hsrulesV$ is translated into 
\begin{equation}
 \ActiveV(x_t,i,y_t,j) \leftarrow \hat{q}(x_t,y_t) 
\end{equation}
To capture the merge choices provided by soft rules, we have the following rules:
\begin{equation}
    \begin{aligned}
        \EqV(x_t,i, y_t,j) \leftarrow & not \ \NeqV(x_t, i, y_t, j)\\
        \NeqV(x_t, i, y_t, j) \leftarrow & \ActiveV(x_t, i, y_t, j) not \ \EqV(x_t, i, y_t, j)
    \end{aligned}
\end{equation}
Rules in $\hsrulesO$ are treated analogously by constructing a set of atoms $q^+$ for every CQ $q(x,y)$ in the body of a rule in $\hsrulesO$ similarly to the construction of $\hat{q}$ described above, except that we initially rename \emph{all} variables in  $q(x,y)$ using fresh variables, and after step 3, we pick a single variable $x_{(u_t,i)}$ and a single variable $y_{(u'_t,j)}$, and replaced them with $x$ and $y$, respectively. 
Each hard rule $q(x,y) \harrow \linkO(x,y)$ is translated into the rule
\begin{equation}
    \EqO(x,y) \leftarrow q^+(x,y). 
\end{equation}
And each soft rule $q(x,t) \sarrow \linkO(x,y)$ is translated into the rule
\begin{equation}
    \ActiveO(x,y) \leftarrow q^+(x,y). 
\end{equation}
Similarly as above, to capture the choice of merges given by soft rules, we have the rules
\begin{equation}
    \begin{aligned}
        \EqO(x,y) \leftarrow & \ActiveO(x,y), not \ \NeqO(x,y) \\ 
        \NeqO(x,t) \leftarrow & \ActiveO(x,y), not \ \EqO(x,y)
    \end{aligned}
\end{equation}
For each denial constraint $\exists \vec{y} \per \varphi(\vec{y}) \rightarrow \bot \in \Delta$, the program  \PSol contains a constraint
\begin{equation}
 \leftarrow \hat{\varphi}   
\end{equation}
where $\hat{\varphi}$ is a set of atoms, constructed from $\varphi$ in similarly to $\hat{q}$.

Additionally, \PSol contains rules enforcing  that  $\EqO$ and $\EqV$ are both  equivalence relations:
\begin{equation}\label{enc:Eqrel}
\begin{aligned}
\EqO(u,u) \leftarrow &\textit{Obj}(u) & && \EqV(u_t,i, u_t,i) \leftarrow &Tid(u_t), Valpos(u_t, i)\\
\EqO(u,v) \leftarrow &\EqO(v,u) && &\EqV(u_t,i, v_t, j) \leftarrow & \EqV(v_t,j , u_t,i)\\
\EqO(u,w) \leftarrow &\EqO(u,v), \EqO(v,w) && & \EqV(u_t,i,w_t, j) \leftarrow  & \EqV(u_t,j, v_t,k),
   \EqV(v_t,k, w_t,j)
\end{aligned}
\end{equation}

The overall proof strategy is then similar to the one for original ASP encoding of \textsc{Lace}, \cite{BienvenuPODS22}. 
Given $\tup{\merO,\merV}\in \SOL(D,\E)$, we construct a corresponding set of facts $M_E$, 
which will include $\EqO(a,b)$ for all $(a,b) \in \merO$, $\EqV(\tup{t,i},\tup{t',i'})$ for all 
$(\tup{t,i},\tup{t',i'}) \in \merV$, as well as $\ActiveO$ and $\ActiveV$ facts whenever there is a soft rule 
whose body is satisfied, and $\NeqO$ and $\NeqV$ facts when the corresponding pair is not included in $\tup{\merO,\merV}$,
plus the expected facts for the auxiliary predicates ($\proj, \obj, \ldots$).
Then we can show, by following the sequence of candidate solutions (and corresponding rule applications) that
witness $\tup{\merO,\merV}\in \SOL(D,\E)$ that 
$M_E$ is the minimal model of the reduct of the grounding of 
$\Pi_{Sol}$ w.r.t.\ $D$ relative to $M_E$, which means $M_E$ is a stable model of $(\Pi_{Sol}, D)$. 
$\Pi_{Sol}$, which means $M_E$ is a stable model of $(\Pi_{Sol}, D)$. 
Conversely, we can transform every stable model $M$ of $(\Pi_{Sol}, D)$ into a corresponding solution 
$\tup{\merO_M,\merV_M}$ where $\merO_M = \{(a,b) \mid \EqO(a,b) \in M\}$ and $\merV_M = \{(\tup{t,i},\tup{t',i'}) \mid \EqV(t,i,t',i') \in M\}$.
Essentially, we reproduce the rule applications that led to the introduction of the $\EqO$ and $\EqV$ facts in $M$ (in the minimal model of the reduct of the ground program relative to $M$)
to produce a corresponding sequence of candidate solutions that witnesses that  $\tup{\merO_M,\merV_M} \in \SOL(D,\E)$. 


While the high-level argument is quite close to the one from \cite{BienvenuPODS22}, 
it remains to argue that the handling of joins between value positions and the similarity atoms respects Definition \ref{queryeval}. 
Consider a rule for values $\rho=q(x_t,y_t) \harrow  \linkV(\tup{x_t,i},\tup{y_t,j}) $, which is encoded by an ASP rule
$\hat{\rho}$ with body $\hat{q}(x_t,y_t)$ (the argument is similar for the rules that encode rules for objects). 
Due to the renaming of variables, relational atoms in $\hat{q}(x_t,y_t)$ are mapped onto the original dataset independently, which 
mirrors how we use separate mappings $g_\ratom$ for the different relational atoms $\ratom \in q$. 
Moreover, for every join variable $v$ from $q$ occurring in some value position, $\hat{q}(x_t,y_t)$ 
contains the atoms in $\{\EqV(u_t, i, u'_t, k) \mid (u_t,i) \in \occ(v) \} $, which all importantly, all use the same pair of fresh variables $u'_t,k$. 
It follows that if the body is satisfied by mapping $(u'_t, k)$ to $(t,p)$, then cell $\tup{t,p}$ has been shown to 
belong to the same equivalence class as the cell $\tup{t_i,i}$ (corresponding to $(u_t, i)$), for every $(u_t,i) \in \occ(v)$. 
This ensures that there is a non-empty intersection between the sets of constants onto 
which the different occurrences of $v$ are mapped in the induced database corresponding to the current $(E,V)$ (which in the ASP program, 
are given by the $\EqO$ and $\EqV$ facts).  
A joined object variable $v$ is instead handled using the atoms $\{\EqO(v_{(u_t,i)},v') \mid (u_t,i) \in \Oocc(v) \}$, which just as in \cite{BienvenuPODS22}, permits joins between object constants from the same equivalence class (here given by $\EqO$ facts). 
Now take some similarity atom $v \approx w \in q$. 
Then $\hat{q}(x_t,y_t)$ contains the atoms 
 $\{\val(u_t, i, v' )| (u_t,i) \in \occ(v) \} \cup \{\val(u'_t, j, w' )| (u'_t,j) \in \occ(w) \} \cup \{v' \approx w'\}$, where $v',w'$ are fresh variables introduced for this particular similarity atom. The intuition is similar to the one for joined value variables: we use 
$\val(u_t, i, v' )$ and  $\val(u'_t, j, w' )$ to pick values that belong to the intersection of the
sets of constants onto which the different occurrences of $v$ and $w$ are mapped in the induced database. 
Then if $v', w'$ map to $c_v, c_w$, we will have $c_v \approx c_w$, which ensures that the original similarity atom $v \approx w \in q$
is satisfied in the induced database corresponding to the current $(E,V)$. 

We have thus argued that whenever we are able to satisfy the body $\hat{q}(x_t,y_t)$  of $\hat(\rho)$ by mapping $(x_t,y_t)$ to $(t,t')$, given the current set of 
$\EqO$ and $\EqV$ facts,
then we can similarly show that $\rho=q(x_t,y_t) \harrow  \linkV(\tup{x_t,i},\tup{y_t,j}) $ is applicable in the corresponding induced database
and we can add $\rho=q(x_t,y_t) \harrow  \linkV(\tup{x_t,i},\tup{y_t,j}) $. Conversely, it is not hard to see that any application of $\rho$ w.r.t.\ $D_{E,V}$
enables a corresponding application of $\hat{\rho}$ when the $\EqO$ and $\EqV$ facts contain all facts corresponding to pairs in $E,V$. 
\end{proof}

\end{document}